\numberwithin{equation}{section}
\newtheorem{Theorem}{Theorem}[section]
\newtheorem{Claim}[Theorem]{Claim}
\DeclareMathSymbol{\leqslant}{\mathalpha}{AMSa}{"36} % nicer `smaller or equal'
\DeclareMathSymbol{\geqslant}{\mathalpha}{AMSa}{"3E} % nicer `larger or equal'
\DeclareMathSymbol{\eset}{\mathalpha}{AMSb}{"3F}     % nicer `emptyset'
\renewcommand{\leq}{\;\leqslant\;}                   % redef. of < or =
\renewcommand{\geq}{\;\geqslant\;}                   % redef. of > or =
\newcommand{\cF}{\ensuremath{\mathcal F}}
\newcommand{\cG}{\ensuremath{\mathcal G}}
\newcommand{\cR}{\ensuremath{\mathcal R}}
\newcommand{\cS}{\ensuremath{\mathcal S}}
\newcommand{\cV}{\ensuremath{\mathcal V}}
\newcommand{\bbD}{{\ensuremath{\mathbb D}} }
\newcommand{\bbE}{{\ensuremath{\mathbb E}} }
\newcommand{\bbP}{{\ensuremath{\mathbb P}} }
\newcommand{\bbR}{{\ensuremath{\mathbb R}} }
\newcommand{\gd}{\delta}
\newcommand{\gp}{\varphi}
\newcommand{\gs}{\sigma}
\newcommand{\Jstat}{{\bar J}}
\newcommand{\bJ}{{\bf J}}
\title{
A perturbative approach to the macroscopic fluctuation theory
}
\author[Bodineau]{Thierry Bodineau$^\dagger$}
\address[$\dagger$]{I.H.E.S., 35 Route de Chartres, 91440 Bures-sur-Yvette, France}
\author[Derrida]{Bernard Derrida$^\clubsuit$}
\address[$\clubsuit$]{Coll\`ege de France, 11 place Marcelin Berthelot, 75005 Paris, France}
\address[$\clubsuit$]{Laboratoire de Physique de l'Ecole Normale Sup\'erieure, ENS, Universit\'e PSL, CNRS, Sorbonne Universit\' e, Universit\' e de Paris, F-75005 Paris, France}
\begin{document}

\maketitle

\today

\begin{abstract}
In this paper, we study the stationary states of diffusive dynamics driven out of equilibrium by reservoirs. For a small forcing, the system remains close to equilibrium and the large deviation functional of the density  can be computed perturbatively by using the macroscopic fluctuation theory. This applies to general domains in $\mathbb{R}^d$ and diffusive dynamics with arbitrary transport coefficients. As a consequence, one can analyse the correlations at the first non trivial order in the forcing and show that, in general, all the long range correlation functions  are not equal to 0, in contrast to the exactly solvable models previously known. 
\end{abstract}

\section{Introduction}

The steady state  of  systems in contact with two reservoirs of particles at different densities
or two heat baths at different temperatures 
 is one  of the most basic situations one can consider  in the study of non-equilibrium systems.  
When the two reservoirs are at the same density or the two heat baths are at the same temperature, the steady state is nothing but the equilibrium  for which  a number of properties can be determined from the knowledge of the free energy.
For macroscopic systems in contact  with two or more reservoirs of particles at different densities,  one  quantity which generalizes the concept of free energy is  the large deviation functional ${\cS} $ of the density profiles. 
Over the last 25 years trying to determine this functional  has  motivated considerable efforts. 
We refer to \cite{BDGJL3,BDGJL4,BDGJL6, Derrida 2007} for some reviews on the topic.

\smallskip

For a one dimensional macroscopic system of length $L$  in contact, at its two ends,  with two reservoirs of particles at densities $\rho_\text{left}$ and $\rho_\text{right}$, this large deviation functional ${\cS   }(\{\gp(x)\})$  is defined by 
\begin{equation}
\text{Pro} [\{\gp(x)\}] \sim \exp \big( - L \,  {\cS   }(\{\gp(x)\}) \big) \quad \text{for large $L$},
\label{eq : LD  1d} 
\end{equation}
where  $\text{Pro}[ \{\gp(x)\} ]$ is the probability of observing a density profile $\gp(x)$ near position $L x$ with $0 < x < 1$. {One way of implementing this definition consists in slicing the macroscopic system  of size $L$ into   a large number $1/\delta $ of  large subsystems of size $L \delta $. Then $\text{Pro}(\{\gp(x)\})$ is simply the probability of finding $\gp(x)\times L \delta $ particles in the interval $\big(Lx, L (x + \delta )\big)$. In general  the large deviation functional $\cS$  depends both on the density profile $\gp(x)$ and on the densities $\rho_\text{left}$ and $\rho_\text{right}$ of the reservoirs.

The generalization to higher dimensions  can be done by considering a  lattice gas   described at the macroscopic level  by its density $\gp (x)$ at position $x$ in a bounded domain $\bbD \subset \bbR^d$ in dimension $d \geq 1$ and  assuming that  each  point $y \in \partial \bbD$
at the boundary  of the domain is connected to a reservoir at density $\rho_\text{b}(y)$. Then for a system of large size $L$ (one can think of $L$ as being the inverse of the lattice spacing), one expects the probability of observing a density profile $\gp = \{\gp(x)\}$ in the bulk  to be given   as in (\ref{eq : LD 1d})  by
\begin{equation}
\label{eq: LD d}
\text{Pro} [ \gp ] \sim \exp \big( - L^d {\cS   }( \gp ) \big)  .  
\end{equation}
As in the one dimensional case, the large deviation functional $\cS$ depends on the profile $\gp$ and on the densities $\{ \rho_\text{b}(y), \ y \in \partial \bbD \}$ of all the reservoirs.

\smallskip

To determine the large deviation functional $\cS$  in the non equilibrium case, two approaches have been followed so far.  One is based on the exact knowledge of the steady state measure, i.e. on the computation of  explicit expressions for the weights of all the microscopic configurations of the system.
This   has been initially implemented for the symmetric \cite{DLS1, DLS2, Tailleur, DHS,SS} and asymmetric exclusion processes \cite{DLS3, DLS4, ED}, then extended to other classes of microscopic models \cite{BE, CGT, CGG, CFGG,GKR}.
The other approach is at the heart of the macroscopic fluctuation theory initiated in \cite{BDGJL1, BDGJL2} and generalises the Freidlin-Wentzell theory \cite{Freidlin-Wentzell} by relating the steady state large deviation functional $\cS$ to the dynamical large deviations of the macroscopic density profiles.
This provides a robust framework to describe non-equilibrium large deviations for general diffusive dynamics beyond the exactly solvable models. In particular, it relies only on the knowledge of the macroscopic transport coefficients (the diffusion $D$ and the conductivity $\sigma$, see Appendix  \ref{Appendix : Transport coefficients}}) and not the detailed structure of the microscopic dynamics. 
Note that the singularities of the large deviation functional can   be  interpreted as phase transitions  \cite{BDGJL5, BK, BKP1, BKP3}.
The mathematical theory of pathwise large deviations \cite{KOV,KL}  has been extended to justify   rigorously this  macroscopic fluctuation theory \cite{BDGJL1 maths, BGL maths, BLM, BEL, BG,DmFG maths, Farfan, FLT, LV}, at least for some microscopic dynamics.

At the macroscopic scale, a general diffusive microscopic dynamics  is well approximated  by a non-linear diffusion equation with a small  noise.
Building on this coarse grained description, the macroscopic fluctuation theory identifies the large deviation functional $\cS ( \gp )$ defined in \eqref{eq: LD d} as the optimal cost of a dynamical fluctuation 
 to reach the density profile $\gp = \{ \gp(x) \}$ starting from the steady state profile  $\bar{\rho} = \{  \bar{\rho}(x) \}$. 
 This provides a (dynamical) variational formula to determine the stationary large deviation functional $\cS(\gp)$, known as the quasi-potential \cite{Freidlin-Wentzell}.
This strategy is recalled in Section \ref{sec:Optimal path to generate a deviation of the density}.
Therefore calculating $\cS$ boils down to analyzing non-linear equations (see (\ref{eq: evolution rho}),(\ref{eq: evolution H}))
to determine the  most likely path associated with the quasi-potential \eqref{eq: quasipotential}. 
Although these non-linear equations are in principle valid for arbitrary diffusive systems, one does not know in general how to solve them, except again for some particular diffusive systems  \cite{BGL}. 
It is striking to note that  the  
 only closed formula for $\cS$,   with long range correlations,  was obtained for one-dimensional processes with constant diffusion coefficient $D$ and a conductivity coefficient $\sigma$ which is a polynomial of order 2.
These coefficients fall into the class of exactly solvable models \cite{DLS2,CFGG} for which the functional can be computed explicitly.  Out of equilibrium,  $\cS$ is also explicit for dynamics related to the Zero-Range process \cite{De masi Ferrari}, but in this model, there is no  long range correlations  out of equilibrium.

\smallskip

The macroscopic fluctuation theory is a generalisation to non-equilibrium systems of the Onsager-Machlup theory \cite{Onsager}.
At equilibrium, i.e. when  all the reservoirs in contact with the system are at the same density $\rho_\text{b}(y)=\rho_*$ for all $y \in \partial \bbD$, the optimal path (we will call it the excitation path)
followed by the system to reach any specified density profile $\gp$ starting from the steady state profile 
$\bar{\rho}$ is exactly the same as the relaxation path followed by the system when it relaxes from $\gp$ to $\bar{\rho}$. From this reversibility property, the dynamical cost of the optimal path can be computed and the Onsager-Machlup theory 
yields a dynamical interpretation of the large deviation functional for  equilibrium systems.
%the  weights of all the microscopic configurations  of interacting particle systems at equilibrium  are prescribed by Gibbs theory.  
For systems with short range interactions and in absence of phase transitions, (the only cases we are discussing here),  
the large deviation functional $\cS^{eq} $  of the density is known: it is related  to the free energy prescribed by Gibbs theory (see \cite[Section 4, Eq. (18)]{Derrida 2007} and also \cite{Lanford, Ellis})  
\begin{equation}
\label{eq: LD equilibrium}
\cS^{eq} (\gp) = \int_\bbD dx \; 
S_{ \rho_*} \big( \gp (x) \big)
\quad \text{with} \quad 
S_{ \rho_*} (\rho)=
f ( \rho) - f ( \rho_* ) - (\rho -\rho_*) f' ( \rho_* )  ,
\end{equation}
where $f(\rho)$ is the free energy   per unit volume of the lattice gas at density $\rho$, i.e. 
\begin{equation}
\label{def: free energy}
f(\rho ) = -\lim_{V\to\infty} {\log Z_{V}(V \rho) \over V}
\end{equation}
with $Z_V(N)$ is the partition function of a system of $N$ particles in a volume $V$. It is easy to check  that  $\cS$ vanishes for  the constant equilibrium profile $\gp(x)=\rho_*$ and that $\cS >0$ for all other profiles (this is a consequence of  the fact that $f$ is strictly convex as we exclude the case of phase transition).

\medskip

From the analogy between the free energy and the large deviation functional for equilibrium systems, 
it is natural to consider the non-equilibrium large deviation functional $\cS$ as a substitute for the free energy defined   by the Gibbs theory.
%Thus one would like to infer quantitative information on the non-equilibrium systems by using the large deviation formalism.
Even though, there is (so far) no explicit solution for the large deviation functional, the dynamical variational principle can be studied in some perturbative regimes. In particular for density profiles $\gp$ close to the stationary profile $\bar \rho$, the excitation path can be analysed by perturbation theory and the correlations for the stationary measure can be recovered by expanding $\cS$ close to $\bar \rho$ \cite{BDGJL2}. 
A different perturbative approach has been implemented in \cite{VWijland Racz} in the case of  weakly interacting systems.

In this paper, we consider another perturbative regime and study the large deviation functional $\cS$  associated with  a small variation of the reservoir  densities, i.e. $|\rho_\text{b}(y)-\rho_\text{b}(z)| \ll 1$ for all $y$ and $z \in \partial \bbD$.
In this case, the system should remain close to equilibrium and one expects that, in agreement with Onsager-Machlup theory,  the excitation and the relaxation paths are close to each other. Based on this idea, the goal of the present work is to develop a perturbative method to calculate the large deviation functional  $\cS$ by using the macroscopic fluctuation theory.
The main result of the present paper is an expression of the large deviation functional (see Claim \ref{Claim: HJ})
%\big(see (\ref{eq: LD ordre 3 d}),
%(\ref{eq: definition V bis d}) and 
%(\ref{eq: reverse optimal trajectory})\big)
for the leading corrections to (\ref{eq: LD equilibrium})  for systems near equilibrium. 
As expected for generic non-equilibrium systems, this functional is in general  non local due to the occurence of long range correlations \cite{DKS,Spohn}.

\section{Optimal path to generate a deviation of the density}
\label{sec:Optimal path to generate a deviation of the density}

In this section, we recall the main features of the macroscopic fluctuation theory \cite{BDGJL1, BDGJL2} and in particular the Euler-Lagrange equations 
(\ref{eq: evolution rho}, \ref{eq: evolution H}, \ref{eq: cost variational principle})
to be solved to determine the quasi-potential, namely the non-equilibrium large deviation functional $\cS$ introduced in \eqref{eq: LD d}.

\medskip

For a large diffusive system  ($L \gg1$),
 the macroscopic density $\rho (t,x)$ is linked to the particle current $q (t,x) \in \bbR^d$ by the conservation law
 \begin{equation}
\label{eq: conservation law}
x \in \bbD, \qquad 
\partial_t \rho(t,x) = - \nabla \cdot  q (t,x)  ,
\end{equation}
 which itself  evolves according to the following fluctuating hydrodynamic equation
\begin{equation}
\label{eq: fluctuating hydro}
%{\color{red} \partial_t \rho(t,x) = - \nabla \cdot  q (t,x) }
 q (t,x) = - D \big( \rho (t,x) \big) \nabla \rho (t,x) + \frac{1}{L^{d/2}} \sqrt{ \sigma \big( \rho (t,x) \big) } \eta(t,x) ,
\end{equation}
where $\eta \in \bbR^d$ is a space-time white noise  (see \cite[Chapter 2.9, Eq. (2.117)]{Spohn book}).
% The parameter $L \to \infty$ is a scaling factor to reduce the microscopic system to these evolution equations. 
Furthermore at the boundary $\partial \bbD$ the density is fixed  by the densities of the reservoirs $\rho(t,y)= \rho_\text{b}(y)$ for $y \in \partial \bbD$.
The dynamics 
(\ref{eq: fluctuating hydro})
is characterised by the diffusion coefficient $D(\rho)$ and the conductivity $\sigma(\rho)$ which are functions depending only on the density $\rho$. We assume that the dynamics 
is isotropic so that $D, \sigma$ are scalar.
% This evolution is compatible with the static description \eqref{eq: LD equilibrium} provided the Einstein relation is satisfied $\sigma (r) = 2 D(r) \chi (r)$ for any density $r$, where $\chi$ is the susceptibility at equilibrium.
We refer to   Appendix \ref{Appendix : Transport coefficients} for a discussion on  the transport coefficients 
$D, \sigma$ and for a derivation of the  Einstein relation : for any density $\rho$, the two coefficients are related by 
\begin{equation}
\label{eq : Einstein relation}
f''(\rho) \, \sigma(\rho)= 2 D(\rho),
\end{equation}
where $f$ is the equilibrium free energy introduced in \eqref{def: free energy}.
In the literature  \cite[Eq. (27)]{Derrida 2007}, $1/f''(\rho)$ is usually called the compressibility.
The stationary profile $\bar{\rho}(x) $ is solution of
\begin{equation}
\label{eq: rho bar}
x \in \bbD, \qquad 
\nabla. \Big(D\big(\bar{\rho}(x)\big)  \, \nabla \bar{\rho}(x) \Big)=0  \quad \quad   \quad \text{ with}  \quad  \bar{\rho}(y)= \rho_\text{b}(y) \quad  \text{for} \quad   y \in \partial \bbD.
\end{equation}
 In this paper, we consider systems such that $D(\rho) \geq c >0$ is uniformly bounded from below.

\medskip

Using the fluctuation of the noise in \eqref{eq: fluctuating hydro},  the probability of observing an atypical density and a current trajectory  $( \rho(t), q(t) )= \{ \rho(t,x), q(t,x) \}_{x \in \bbD}$ during any macroscopic time interval $[\tau_1,\tau_2]$ and starting at $\tau_1$ from the initial profile $\rho(\tau_1)$  is given by
\begin{equation}
\label{eq: LD hydro}
\bbP_{[\tau_1,\tau_2]} \Big( \text{observing $(\rho(t), q(t))$ for $\tau_1 \leq t \leq \tau_2$} \; \big| \; \rho(\tau_1) \Big) 
\sim
\exp \left( - L^d \;  \cF_{[\tau_1,\tau_2]} (\rho,q) \right) ,
\end{equation}
under the constraint  of the conservation law \eqref{eq: conservation law}  and with 
\begin{equation}
\label{eq: LD hydro1}
\cF_{[\tau_1,\tau_2]} (\rho,q) =  \int_{\tau_1}^{\tau_2} dt \int_\bbD dx 
\frac{\big(q (t,x) + D \big( \rho (t,x) \big) \nabla \rho (t,x) \big)^2}{2 \sigma \big( \rho (t,x) \big)} .
\end{equation}

\begin{figure}[htbp]
\begin{center}
\includegraphics[width=4in]{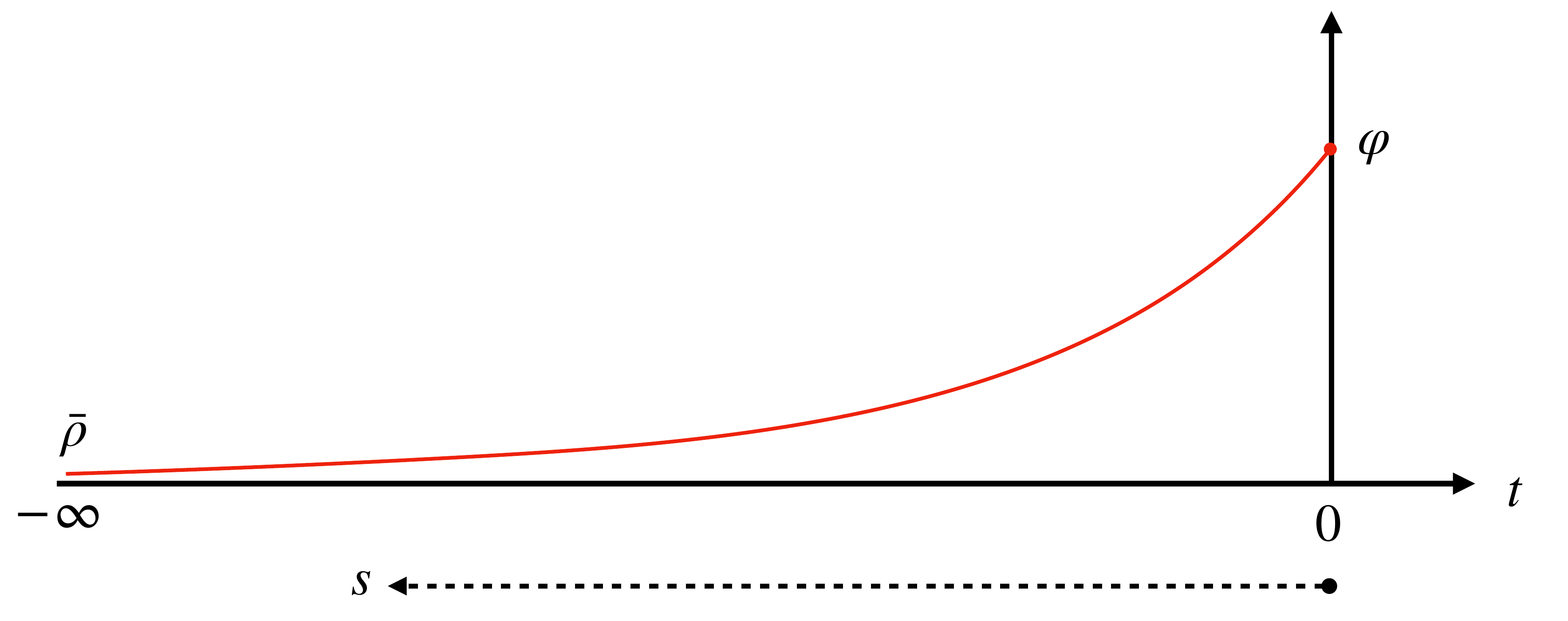} 
\caption{\small On the figure, an optimal trajectory $( \rho(t) )_{t \leq 0}$  solution of 
 \eqref{eq: evolution rho} is depicted. It interpolates between $\bar \rho$ at time $-\infty$ and $\varphi$ at time 0. 
 This can be viewed as  a relaxation path of the density profile from $\gp$ to $\bar \rho$ by reversing time $s = -t$ (see the dashed arrow on the figure and \eqref{eq: reverse optimal trajectory}).
%It will be also convenient to reverse time (see \eqref{eq: reverse optimal trajectory})  and view this trajectory as a relaxation profile $( \rho(s) )_{s \geq 0}$ starting at time 0 from $\rho (0) = \varphi$. The forward  time  parameter will be denoted $t \leq 0$ and the reverse time parameter by $s \geq 0$. 
}
\label{fig: optimal path}
\end{center}
\end{figure}
The large deviation functional $\cS$ is given by the quasi-potential  (see \cite{BDGJL2, Freidlin-Wentzell}), i.e. the smallest  cost of a dynamical fluctuation :
\begin{equation}
\label{eq: quasipotential}
\cS (\gp) 
= \inf \big\{ \cF_{[-\infty,0]} (\rho,q); \quad \rho(0,x) = \gp (x) \big\},
\end{equation}
where the time interval is now infinite $]-\infty,0]$ and the infimum is taken over all the trajectories with density $\gp(x)$ at the final time $t =0$ and such that  $\rho(t) \to \bar{\rho}$ as $ t \to- \infty$. 

 According to the macroscopic fluctuation theory  \cite{BDGJL2} (and see 
 Appendix \ref{Appendix : Equations for the optimal profiles} for a derivation), 
 the current which minimizes (\ref{eq: quasipotential})
can be written as  
\begin{equation}
\label{eq: optimal current}
q(t,x)=- D \big( \rho \big) \nabla \rho + \sigma\big( \rho \big) \nabla H
\end{equation}
with $H(t,y) = 0$ for $y \in \partial \bbD$.
This leads to (see (\ref{eq: LD hydro1}),(\ref{eq: quasipotential}))
\begin{equation}
\label{eq: cost variational principle}
\cS (\gp) 
= \frac{1}{2} \int_{-\infty}^0 dt 
\int_\bbD dx \,  \sigma \big( \rho (t,x)\big) \, \big( \nabla H (t,x)\big)^2 .
\end{equation}
 From (\ref{eq: optimal current}) and the conservation law \eqref{eq: conservation law},
the density evolves as 
\begin{equation}
\label{eq: evolution rho}    
\partial_t \rho  = \nabla \cdot ( D(\rho) \nabla  \rho) - \nabla \cdot  ( \sigma(\rho) \nabla  H)
\quad \text{ with} \quad  \rho(t,y)= \rho_\text{b}(y) \quad  \text{for}    \quad     y \in \partial \bbD.
\end{equation}
As $\sigma (\rho)>0$,  note that
 once $\rho(t)$ is given,
  (\ref{eq: evolution rho}) fully determines
the function $H(t)$ 
under the constraint that  $H(t,y)=0$ for $y \in \partial \bbD$.
In Appendix \ref{Appendix : Equations for the optimal profiles}, it is also recalled that the evolution of $H$ for a path which minimizes the cost (\ref{eq: cost variational principle}) is given by
\begin{equation}
\label{eq: evolution H}
\partial_t H = -  D(\rho) \Delta  H - \frac{1}{2} \sigma'(\rho) ( \nabla  H)^2 .
\end{equation}
The optimal evolution equations \eqref{eq: evolution rho}    and \eqref{eq: evolution H} are standard in the macroscopic fluctuation theory \cite{BDGJL2}. 
%and their derivation is recalled in Section \ref{Appendix : Equations for the optimal profiles}. 
Note that this system is constrained by the conditions on the density at time $0$ and $-\infty$. At time $t=0$, one also has (see \eqref{eq: cF et H appendix})
\begin{equation}
\label{eq: optimal H 0}
x \in \bbD, \qquad 
 H (0,x)= \frac{\delta \cS}{\delta \gp (x)} (\gp) ,
\end{equation} 
%\begin{equation}
%\label{eq: optimal H}
%H (t,x)= \frac{\delta \cS}{\delta \gp} (\rho(t,x)),
%\end{equation}
where the derivative is wrt the final density $\gp$. By the optimality of the path \eqref{eq: evolution rho}, this relation remains true at any time  $t \le 0$ so that  in \eqref{eq: evolution rho}
% $H (t)= \frac{\delta \cS}{\delta \gp} (\rho(t))$
\begin{equation}
\label{eq: optimal H}
x \in \bbD, \qquad H (t,x)= \frac{\delta \cS}{\delta \gp (x)} (\rho (t)). 
\end{equation}

\medskip

We conclude this section by recovering the optimal excitation path at equilibrium from the previous framework.
The equilibrium large deviation functional is explicit \eqref{eq: LD equilibrium} so that \eqref{eq: optimal H} implies the following relation along the optimal path \eqref{eq: evolution rho}
\begin{equation}
\label{eq: optimal H 2}
H (t)= \frac{\delta \cS^{eq} }{\delta \gp} (\rho(t))
=  f'(\rho(t))-f'(\rho_*).
\end{equation} 
Plugging \eqref{eq: optimal H} in the optimal density fluctuation \eqref{eq: evolution rho} and 
using Einstein relation \eqref{eq : Einstein relation}, we see that the optimal path  is just the time reversal of the relaxation trajectory
\begin{equation}
\label{eq: conservation law Onsager-Machlup}
\partial_t \rho  = - \nabla \cdot ( D(\rho) \nabla  \rho) 
 \quad \text{for}  \quad  t \leq 0 ,
\end{equation}
in agreement with Onsager-Machlup theory \cite{Onsager}.

\section{Perturbative  non-equilibrium  large deviations}
\label{sec: Perturbative  non-equilibrium  large deviations}

\subsection{Statement of the large deviation functional}

We consider now small variations of the density $\rho_\text{b}$ at the reservoirs.
In this case, the stationary density $\bar \rho$ defined in \eqref{eq: rho bar} is weakly  space dependent 
%and satisfies  $\nabla \cdot (D \big( \bar \rho \big) \nabla \bar \rho ) =0$ with fixed densities $\bar \rho(y) = \rho_\text{b}(y)$  at the boundary $y \in \partial \bbD$ 
so that the stationary current
\begin{equation}
\label{eq: definition current}
x \in \bbD, \quad 
\Jstat(x) = - D \big( \bar \rho (x) \big) \nabla \bar \rho (x)
\end{equation}
 is assumed to be small 
\begin{equation}
\label{eq: assumption current}
\bJ = \sup_{x \in \bbD}  | \Jstat(x)|  \ll 1.
\end{equation}
We are now going to use the  dynamical approach described in Section \ref{sec:Optimal path to generate a deviation of the density} 
to compute perturbatively in $\bJ \ll 1$ the large deviation functional $\cS$ defined in \eqref{eq: quasipotential}.

\medskip

As the system remains close to equilibrium, the dominant part of the large deviation functional will be 
a perturbation of $\cS^{eq}$ defined in  \eqref{eq: LD equilibrium}. Thus we set 
\begin{equation}
\label{eq: local equilibrium  d 1}
\cS^{local} (\gp) = \int_\bbD dx S_{\bar \rho(x)}(\gp(x))
\quad \text{with} \quad 
S_{\bar \rho}(\gp) = 
f(\gp)-f(\bar\rho)- (\gp - \bar\rho) f'(\bar\rho),
\end{equation}
which can be rephrased in terms of the transport coefficients by the Einstein relation 
\eqref{eq : Einstein relation}
\begin{equation}
\label{eq: local equilibrium  d}
S_{\bar \rho}'(\gp) = \int_{\bar \rho}^\gp  dr \frac{2 D(r)}{\sigma(r)}.
\end{equation}

The main result of this paper is an expansion of the non-equilibrium large deviation functional $\cS$ at order
$\bJ^3$. This encodes the long range correlations in the steady state. For this, we consider the function 
$\Gamma : \bbR^+ \mapsto \bbR$ such that  $\Gamma'(\rho) = D(\rho)$ and we define
the functional 
\begin{align}
\label{eq: definition V bis d}
\cV(\gp) & = - 2  \int_0^\infty ds \;
\int_\bbD dx  \frac{  ( \Jstat  )^2  }{\sigma(\bar \rho)^2} \left( \sigma(\rho (s) ) - \sigma(\bar \rho)   -  \frac{\Gamma(\rho(s))- \Gamma(\bar \rho)}{ D(\bar \rho)}  \sigma'(\bar \rho) \right),
\end{align}
where $(\rho (s))_{s \geq 0}$ is the solution of 
\begin{equation}
\label{eq: reverse optimal trajectory}
s \geq 0, x \in \bbD, \qquad 
\partial_s \rho (s,x)  =  \nabla \cdot ( D(\rho(s,x) ) \nabla  \rho(s,x) ) + 2 \Jstat (x) \cdot \nabla \left(  \frac{\sigma(\rho(s,x) ) }{\sigma(\bar \rho(x))} \right)
\end{equation}
starting from $\rho(0,x) = \gp(x)$ (see Figure \ref{fig: optimal path}).
Note from \eqref{eq: rho bar} that  the stationary density $\bar \rho$ is a time independent solution of \eqref{eq: reverse optimal trajectory} so that  $\cV(\bar \rho) = 0$.

\begin{Claim}
\label{Claim: HJ}
The large deviation cost $\cS(\gp)$ for observing a density profile $\gp$ in the stationary  state is of the form 
\begin{equation}
\label{eq: LD ordre 3 d}
\cS (\gp) = \cS^{local} (\gp)  + \cV(\gp) + O (\bJ^4),
\end{equation}
where $\cV$ is defined in \eqref{eq: definition V bis d}.
\end{Claim}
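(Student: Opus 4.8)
\medskip
\noindent\textit{Proof strategy.}
The plan is to verify that $\cS^{local}+\cV$ solves, up to an error of size $\bJ^{4}$, the stationary Hamilton--Jacobi (HJ) equation of the macroscopic fluctuation theory, and then to conclude by the characterisation of the quasi-potential $\cS$ as (the maximal non-negative, vanishing at $\bar\rho$) solution of that equation. Differentiating $\cS(\rho(t))$ along the optimal path and using \eqref{eq: optimal H}, \eqref{eq: cost variational principle}, \eqref{eq: evolution rho} together with $H=0$ on $\partial\bbD$, one first records the HJ equation: for every profile $u$ with $u=\rho_\text{b}$ on $\partial\bbD$,
\[
\cH[\cA](u):=\int_{\bbD}\Big[\tfrac12\,\sigma(u)\,\big|\nabla\tfrac{\delta\cA}{\delta u}\big|^{2}-D(u)\,\nabla u\cdot\nabla\tfrac{\delta\cA}{\delta u}\Big]\,dx=0,\qquad \cA=\cS,
\]
with $\tfrac{\delta\cS}{\delta u}=0$ on $\partial\bbD$, and $\cS$ is the maximal such solution with $\cS(\bar\rho)=0$.

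Next I would treat the local part. From \eqref{eq: local equilibrium  d 1} one has $\tfrac{\delta\cS^{local}}{\delta u}=f'(u)-f'(\bar\rho)$, which vanishes on $\partial\bbD$; applying the Einstein relation \eqref{eq : Einstein relation} for $u$ and for $\bar\rho$ and using $\Jstat=-D(\bar\rho)\nabla\bar\rho$ from \eqref{eq: definition current}, one gets the key algebraic identity $\sigma(u)\,\nabla\tfrac{\delta\cS^{local}}{\delta u}=2D(u)\nabla u+W$ with $W:=\tfrac{2\sigma(u)}{\sigma(\bar\rho)}\Jstat$. Substituting this into $\cH[\cS^{local}](u)$, the equilibrium-type quadratic terms in $\nabla u$ cancel and one is left with $\int_{\bbD}\big(\tfrac{D(u)}{\sigma(u)}W\cdot\nabla u+\tfrac{W^{2}}{2\sigma(u)}\big)\,dx$. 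Writing $2D(u)\nabla u=\nabla(2\Gamma(u))$, integrating the first term by parts, and using $\nabla\cdot\Jstat=0$ (from \eqref{eq: rho bar}) and $u=\bar\rho$ on $\partial\bbD$ to reorganise the boundary terms, one finds precisely
\[
\cH[\cS^{local}](u)=2\int_{\bbD}\frac{(\Jstat)^{2}}{\sigma(\bar\rho)^{2}}\Big(\sigma(u)-\sigma(\bar\rho)-\frac{\Gamma(u)-\Gamma(\bar\rho)}{D(\bar\rho)}\,\sigma'(\bar\rho)\Big)\,dx ,
\]
an $O(\bJ^{2})$ quantity that vanishes at $u=\bar\rho$; in particular the subtraction involving $\Gamma$ in \eqref{eq: definition V bis d} is forced here by the boundary bookkeeping, not by convergence.

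The heart of the argument is then that $\cV$ cancels this residual. Set $h:=\tfrac{\delta\cV}{\delta u}$; by the $(\Jstat)^{2}$ prefactor in \eqref{eq: definition V bis d} it is $O(\bJ^{2})$ and vanishes on $\partial\bbD$. Expanding, $\cH[\cS^{local}+\cV](u)=\cH[\cS^{local}](u)+\int_{\bbD}(D(u)\nabla u+W)\cdot\nabla h\,dx+\tfrac12\int_{\bbD}\sigma(u)|\nabla h|^{2}\,dx$. Integrating the middle term by parts and using $\nabla\cdot\Jstat=0$, it equals $-\int_{\bbD}h\,\nabla\cdot(D(u)\nabla u+W)\,dx=-\int_{\bbD}h\,\partial_{s}\rho^{u}(0)\,dx$, because $\nabla\cdot(D(u)\nabla u+W)$ is exactly the right-hand side of \eqref{eq: reverse optimal trajectory} for the flow $\rho^{u}(\cdot)$ started at $u$. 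On the other hand, the semigroup property of \eqref{eq: reverse optimal trajectory} applied to the definition of $\cV$ gives $\tfrac{d}{d\tau}\cV(\rho^{u}(\tau))\big|_{\tau=0}=\int_{\bbD}h\,\partial_{s}\rho^{u}(0)\,dx$, and this derivative is precisely $\cH[\cS^{local}](u)$ by the previous paragraph. Hence the first two terms cancel and $\cH[\cS^{local}+\cV](u)=\tfrac12\int_{\bbD}\sigma(u)|\nabla h|^{2}\,dx=O(\bJ^{4})$.

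Two analytic inputs remain. First, one must show that \eqref{eq: reverse optimal trajectory} is globally well posed and relaxes to $\bar\rho$ exponentially fast, which makes $\cV$ and $h=\tfrac{\delta\cV}{\delta u}$ well defined, $O(\bJ^{2})$, with convergent space--time integrals; this uses $D\ge c>0$ and quasilinear parabolic estimates together with the maximum principle. Second, and this is the main obstacle, one must upgrade ``$\cS^{local}+\cV$ solves HJ up to $O(\bJ^{4})$'' to ``$\cS=\cS^{local}+\cV+O(\bJ^{4})$''. I would do this by a two-sided comparison: the bound $\cS(\gp)\le\cS^{local}(\gp)+\cV(\gp)+O(\bJ^{4})$ comes from \eqref{eq: quasipotential}--\eqref{eq: cost variational principle} by using as competitor the time-reversal of \eqref{eq: reverse optimal trajectory} with $H=\tfrac{\delta\cS^{local}}{\delta\rho}+h$, while the matching lower bound follows from the quantitative maximality of $\cS$ among subsolutions after subtracting $C\bJ^{4}$ from $\cS^{local}+\cV$ to obtain a genuine subsolution, the non-degeneracy of the linearised relaxation near equilibrium (again $D\ge c>0$) preventing any loss of powers of $\bJ$. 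Establishing this quantitative comparison principle for the infinite-dimensional stationary HJ equation of the MFT is the delicate step; everything else is bookkeeping with the Einstein relation, $\nabla\cdot\Jstat=0$, and the identity $\Gamma'=D$.
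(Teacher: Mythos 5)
Your argument is correct at the same (formal) level as the paper's, but it takes a genuinely different route. The paper constructs the optimal path itself: it inserts the ansatz \eqref{eq: ansatz H d} into the Hamiltonian system \eqref{eq: evolution equations d}, derives the coupled equations for $(\rho,h)$ (Claim \ref{Claim: perturbative equations}), argues that stability forces $h=O(\bJ^2)$, solves the resulting backward-unstable linear equation for $h$ via the Green function \eqref{eq: def Green function}--\eqref{eq: h(s)}, and finally identifies $h(0)=\delta\cV/\delta\gp$ so that \eqref{eq: H0 et S} yields the claim. You instead verify the stationary Hamilton--Jacobi equation (Claim \ref{prop: HJ}) for the candidate $\cS^{local}+\cV$ directly; your computation of
\begin{equation*}
\cH[\cS^{local}](u)=2\int_\bbD \frac{(\Jstat)^2}{\sigma(\bar\rho)^2}\Big(\sigma(u)-\sigma(\bar\rho)-\tfrac{\Gamma(u)-\Gamma(\bar\rho)}{D(\bar\rho)}\sigma'(\bar\rho)\Big)\,dx
\end{equation*}
is exact (it uses the same key identity as \eqref{eq: perturbative H d gradient}, together with $\nabla\cdot\Jstat=0$ and $\Gamma'=D$), and the cancellation against $\tfrac{d}{d\tau}\cV(\rho^u(\tau))\big|_{\tau=0}$ via the semigroup property leaves the residual exactly $\tfrac12\int\sigma(u)|\nabla h|^2=O(\bJ^4)$ with $h=\delta\cV/\delta u$. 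This buys two things the paper does not make explicit: you never need to solve the unstable equation for $h$ along the path (only its value at $s=0$, i.e.\ $\delta\cV/\delta u$, enters), and your competitor-trajectory computation actually gives the clean \emph{exact} upper bound $\cS(\gp)\le\cS^{local}(\gp)+\cV(\gp)$, since the cost of the time-reversed flow \eqref{eq: reverse optimal trajectory} telescopes to $\cS^{local}(\gp)-\int_{-\infty}^0\cH[\cS^{local}](\rho(t))\,dt$. What the paper's dynamic route buys in exchange is the explicit Green-function representation \eqref{eq: h(s)}, which is what feeds the cumulant computations of Section 4. The one step you flag as delicate --- upgrading an approximate HJ supersolution to a two-sided estimate on the quasi-potential --- is indeed not carried out in your sketch, but the paper does not address the corresponding issue either (it assumes without proof that the perturbatively constructed path is the optimizer and that \eqref{eq: H0 et S} applies to it); the statement is presented as a Claim with a derivation, and your proposal meets that standard while localizing the missing analytic input more honestly.
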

This claim is derived in  Section \ref{sec: Optimal trajectories}.
% by computing perturbatively in $\bJ$ the optimal evolution equations \eqref{eq: evolution rho}    and \eqref{eq: evolution H}.

\medskip

Note that for the Zero-Range process (see \cite{KL}),  the transport coefficients satisfy the relation $ \sigma'(\rho) =2 D(\rho)$ so that  $\Gamma (\rho) = \sigma (\rho)/2$. For this choice,  the non-local contribution $\cV$  \eqref{eq: definition V bis d} to the large deviation  $\cS$ vanishes in agreement with the absence of long range correlations in this model \cite{De masi Ferrari}.

\subsection{Derivation of Claim \ref{Claim: HJ}}
\label{sec: Optimal trajectories}

We are now going to check Claim \ref{Claim: HJ} by computing 
 the optimal trajectories in order to provide a solution of the variational problem \eqref{eq: quasipotential}.

We recall the optimal equations \eqref{eq: evolution rho}    and \eqref{eq: evolution H} associated with 
the variational problem \eqref{eq: quasipotential} on the time inteval $]-\infty , 0]$ :
\begin{equation}
\label{eq: evolution equations d}
t \leq 0, \qquad 
\begin{cases}
\partial_t \rho  = \nabla \cdot ( D(\rho) \nabla  \rho) - \nabla  \cdot( \sigma(\rho) \nabla  H) , \\
\partial_t H  = -  D(\rho) \Delta  H - \frac{1}{2} \sigma'(\rho) ( \nabla  H)^2,
\end{cases}
\end{equation}
with  fixed density  $\rho(t) = \rho_\text{b}$ on $\partial \bbD$ imposed by the reservoirs  and 
Dirichlet boundary conditions $H(t) =0$.
The main difficulty is that only the density is prescribed initially (for time $t \to -\infty$) and at the final time ($t=0$):
\begin{equation}
\label{eq: boundary conditions optimal}
x \in \bbD, \qquad \rho(t,x) \xrightarrow[t \to -\infty]{} \bar \rho(x), \qquad \rho(0,x) = \gp(x).
\end{equation}
The control $H$ at time $t=0$ has be determined and from this, the functional
$\cS$ will then  be obtained thanks to the relation \eqref{eq: cF et H appendix}
\begin{equation}
\label{eq: H0 et S}
\frac{\delta \cS}{\delta \gp} (\gp) = H(t = 0).
\end{equation}
 To solve \eqref{eq: evolution equations d}, we use that at equilibrium $H$ is explicit 
 (see (\ref{eq: optimal H 2},\ref{eq: local equilibrium  d}))
and we look for an ansatz of the following form to take into account the new stationary profile $\bar \rho (x)$
\begin{align}
\label{eq: ansatz H d}
x \in \bbD, t \leq 0, \quad
  H (t,x) 
=   \int_{\bar \rho(x)}^{ \rho (t,x)} du \frac{2 D(u)}{\sigma(u)} 
 +  h(t,x),
\end{align}
where $h$ is a correction vanishing with $\bJ$ and $h(t,x) = 0$ for $x \in \partial \bbD$.

\begin{Claim}
\label{Claim: perturbative equations}
The optimal evolution equations \eqref{eq: evolution equations d} can be rewritten in terms of $h$ as 
\begin{equation}
\label{eq: perturbative evolution equations 2}
\begin{cases}
\partial_t \rho  = - \nabla\cdot ( D(\rho) \nabla  \rho) - 2  \Jstat  \cdot  \nabla \left(  \frac{\sigma(\rho) }{\sigma(\bar \rho)} \right) - \nabla  \cdot ( \sigma(\rho) \nabla  h), \\
\partial_t h  =    D(\rho) \Delta h 
 - 2 \frac{\sigma'(\rho)}{\sigma(\bar \rho)}  \Jstat \cdot  \nabla h 
-   \frac{2 \big(\Jstat \big)^2  }{\sigma(\bar \rho)^2}    \left(  \sigma'(\rho)   
 -  \frac{D(\rho)}{ D(\bar \rho)}  \sigma'(\bar \rho)  \right)  
 -\frac{1}{2} \sigma'(\rho) \left(   \nabla h \right)^2,
\end{cases}
\end{equation}
with boundary conditions on $\partial \bbD$ given by $\rho_\text{b}$ for the density and $0$ for $h$.
We stress that in the equation above the stationary density $\bar \rho$ and current $\Jstat$ depend on $x \in \bbD$.
\end{Claim}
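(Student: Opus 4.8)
The plan is to substitute the ansatz \eqref{eq: ansatz H d} for $H$ into the pair of optimal equations \eqref{eq: evolution equations d} and collect terms, treating the substitution as a pure algebraic/differential manipulation. Write $G(t,x) = \int_{\bar\rho(x)}^{\rho(t,x)} du\, \frac{2D(u)}{\sigma(u)}$, so that $H = G + h$ with $G$ the ``local equilibrium'' part built from the new stationary profile $\bar\rho(x)$, and $G$ vanishes on $\partial\bbD$ since $\rho = \bar\rho = \rho_\text{b}$ there, consistent with the Dirichlet condition $H = 0$. The key observation making this work is the identity, coming from the Einstein relation \eqref{eq : Einstein relation} and the definition of $\Gamma$ with $\Gamma' = D$, that $\sigma(\rho)\nabla G = 2D(\rho)\nabla\rho - 2\frac{\sigma(\rho)}{\sigma(\bar\rho)} D(\bar\rho)\nabla\bar\rho$ (the spatial gradient of $G$ picks up both the $\rho$-dependence through the integrand evaluated at the endpoint and the explicit $x$-dependence through the lower limit $\bar\rho(x)$). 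Since $\Jstat = -D(\bar\rho)\nabla\bar\rho$ by \eqref{eq: definition current}, this reads $\sigma(\rho)\nabla G = 2D(\rho)\nabla\rho + 2\frac{\sigma(\rho)}{\sigma(\bar\rho)}\Jstat$.

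First I would handle the density equation. Expanding $\nabla\cdot(\sigma(\rho)\nabla H) = \nabla\cdot(\sigma(\rho)\nabla G) + \nabla\cdot(\sigma(\rho)\nabla h)$ and inserting the identity above for $\sigma(\rho)\nabla G$ gives $\nabla\cdot(\sigma(\rho)\nabla H) = \nabla\cdot(2D(\rho)\nabla\rho) + 2\nabla\cdot\!\big(\frac{\sigma(\rho)}{\sigma(\bar\rho)}\Jstat\big) + \nabla\cdot(\sigma(\rho)\nabla h)$. But $\Jstat$ is divergence-free (that is precisely \eqref{eq: rho bar}, $\nabla\cdot\Jstat = -\nabla\cdot(D(\bar\rho)\nabla\bar\rho) = 0$), so $\nabla\cdot\!\big(\frac{\sigma(\rho)}{\sigma(\bar\rho)}\Jstat\big) = \Jstat\cdot\nabla\!\big(\frac{\sigma(\rho)}{\sigma(\bar\rho)}\big)$. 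Substituting into the first line of \eqref{eq: evolution equations d}, the $\nabla\cdot(D(\rho)\nabla\rho)$ term from $\partial_t\rho$ cancels against $+2\nabla\cdot(D(\rho)\nabla\rho)$ partially: $\partial_t\rho = \nabla\cdot(D(\rho)\nabla\rho) - 2\nabla\cdot(D(\rho)\nabla\rho) - 2\Jstat\cdot\nabla\!\big(\frac{\sigma(\rho)}{\sigma(\bar\rho)}\big) - \nabla\cdot(\sigma(\rho)\nabla h) = -\nabla\cdot(D(\rho)\nabla\rho) - 2\Jstat\cdot\nabla\!\big(\frac{\sigma(\rho)}{\sigma(\bar\rho)}\big) - \nabla\cdot(\sigma(\rho)\nabla h)$, which is exactly the first equation of \eqref{eq: perturbative evolution equations 2}.

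Next I would do the $H$-equation. The difficulty here is that $\partial_t H = \partial_t G + \partial_t h$ and $\partial_t G = \frac{2D(\rho)}{\sigma(\rho)}\partial_t\rho$ (only the endpoint depends on $t$), so one must substitute the \emph{already-derived} evolution for $\partial_t\rho$ into $\partial_t G$. Meanwhile $\Delta H = \Delta G + \Delta h$ and $(\nabla H)^2 = (\nabla G)^2 + 2\nabla G\cdot\nabla h + (\nabla h)^2$, and $\nabla G = \frac{2D(\rho)}{\sigma(\rho)}\nabla\rho + \frac{2}{\sigma(\bar\rho)}\Jstat$ (dividing the boxed identity by $\sigma(\rho)$). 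One then plugs everything into the second line of \eqref{eq: evolution equations d} and verifies, after cancellation, that all the terms not involving $h$ collapse to the inhomogeneous source $-\frac{2(\Jstat)^2}{\sigma(\bar\rho)^2}\big(\sigma'(\rho) - \frac{D(\rho)}{D(\bar\rho)}\sigma'(\bar\rho)\big)$, the $\nabla h$ terms assemble into $D(\rho)\Delta h - 2\frac{\sigma'(\rho)}{\sigma(\bar\rho)}\Jstat\cdot\nabla h$, and the $(\nabla h)^2$ term keeps its coefficient $-\frac12\sigma'(\rho)$. This is the genuinely laborious step: it requires using $\Delta G$ expressed via $\nabla\rho$, $\Delta\rho$, $\Jstat$, $\nabla\Jstat$, the chain rule on $D(\rho)/\sigma(\rho)$, and the vanishing of $\nabla\cdot\Jstat$, together with the Einstein relation to convert $f''$-type factors. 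I expect the main obstacle to be bookkeeping: ensuring that the second-order-in-$\Jstat$ terms arising from $(\nabla G)^2$ and from feeding $-2\Jstat\cdot\nabla(\sigma(\rho)/\sigma(\bar\rho))$ through $\frac{2D(\rho)}{\sigma(\rho)}\partial_t\rho$ combine precisely into the stated source term, with no leftover $O(\bJ^2)$ pieces. Throughout, boundary conditions are immediate: $\rho = \rho_\text{b}$ on $\partial\bbD$ is unchanged, and $h = H - G = 0 - 0 = 0$ on $\partial\bbD$.
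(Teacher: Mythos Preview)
Your proposal is correct and follows essentially the same route as the paper's proof in Appendix~\ref{Appendix: Check of Claim}: compute $\nabla H$ from the ansatz to get $\nabla G = \frac{2D(\rho)}{\sigma(\rho)}\nabla\rho + \frac{2}{\sigma(\bar\rho)}\Jstat$, use $\nabla\cdot\Jstat=0$ to obtain the $\rho$-equation, then feed the resulting $\partial_t\rho$ into $\partial_t H = \frac{2D(\rho)}{\sigma(\rho)}\partial_t\rho + \partial_t h$ and match against $-D(\rho)\Delta H - \tfrac12\sigma'(\rho)(\nabla H)^2$ to extract the $h$-equation. The paper likewise flags the second step as ``tedious computations'' and carries out exactly the bookkeeping you describe; your mention of the Einstein relation is unnecessary here since the ansatz is already written in terms of $D/\sigma$, but this does not affect the argument.
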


We state now some consequences of Claim \ref{Claim: perturbative equations} and refer to
 Appendix \ref{Appendix: Check of Claim} for its derivation.
Equations \eqref{eq: perturbative evolution equations 2} encodes the optimal path  for $t \in ]- \infty,0]$.
It is convenient to assume that the functions $\rho(0,x) = \gp(x)$, $h(0,x)$ are prescribed at time 0
and to consider instead the time reversed equation with $s \in [0, +\infty[$ 
\begin{equation}
\label{eq: perturbative evolution equations reversed}
\begin{cases}
\partial_s \rho  =  \nabla ( D(\rho) \nabla  \rho)  + 2  \Jstat  \cdot  \nabla \left(  \frac{\sigma(\rho) }{\sigma(\bar \rho)} \right) + \nabla ( \sigma(\rho) \nabla  h) , \\
\partial_s h  = -   D(\rho) \Delta h 
 + 2 \frac{\sigma'(\rho)}{\sigma(\bar \rho)}  \Jstat \cdot  \nabla h 
+   \frac{2 \big(\Jstat \big)^2  }{\sigma(\bar \rho)^2}    \left(  \sigma'(\rho)   
 -  \frac{D(\rho)}{ D(\bar \rho)}  \sigma'(\bar \rho)  \right)  
 +\frac{1}{2} \sigma'(\rho) \left(   \nabla h \right)^2.
\end{cases}
\end{equation}
In the following $s$ stands for the time reversed parameter (see Figure \ref{fig: optimal path}).

If $h$ was of order $\bJ$, then its evolution at order $\bJ$ would be 
$\partial_s h  = - D(\rho) \Delta h$ which is unstable. Thus $h(0)=0$ at order $\bJ$ and we will look for higher order corrections.  Rewriting the evolution  \eqref{eq: perturbative evolution equations reversed} of $h$ up to order $\bJ^3$ gives :
\begin{equation}
\label{eq: perturbative equations h order 3}
\partial_s h  = -   D(\rho) \Delta h 
 + 2 \frac{\sigma'(\rho)}{\sigma(\bar \rho)}  \Jstat \cdot  \nabla h 
+   \frac{2 \big(\Jstat \big)^2  }{\sigma(\bar \rho)^2}    \left(  \sigma'(\rho)   
 -  \frac{D(\rho)}{ D(\bar \rho)}  \sigma'(\bar \rho)  \right)  ,
\end{equation}
where $\rho$ satisfies the autonomous equation \eqref{eq: reverse optimal trajectory} recalled below
\begin{equation}
\label{eq: perturbative  equations rho}
\forall s \geq 0, \qquad 
\partial_s \rho  =  \nabla ( D(\rho) \nabla  \rho) +  2  \Jstat  \cdot  \nabla \left(  \frac{\sigma(\rho) }{\sigma(\bar \rho)} \right) 
\end{equation}
with $\rho(0) = \gp$. Indeed, it is enough to keep the corrections of $\rho$ at order $\bJ$ 
to take into account all the terms up to order  $\bJ^3$ in \eqref{eq: perturbative equations h order 3}.

Due to the negative sign in front of the diffusion term in \eqref{eq: perturbative equations h order 3},
the function $h$ obeys an unstable equation with a source term (of order $\bJ^2$) determined by $(\rho(s))_{s \geq 0}$, the solution of  \eqref{eq: perturbative  equations rho}. We are going to show that this instability fully determines the initial data $h(0)$.
% by imposing a vanishing forcing at large times $h(s) \xrightarrow[s \to + \infty]{} 0$.
Given the solution $(\rho(s))_{s \geq 0}$  of  \eqref{eq: perturbative  equations rho}, 
let us introduce for $s'\geq s$ the Green function $G(s',y|s,x)$ solution of
\begin{equation}
\label{eq: def Green function}
{d G \over ds'} = \Delta_y \Big(D(\rho(s',y)) G  \Big)
+ 2 \bar{J}(y) \cdot \nabla_y \left({\sigma'( \rho(s',y)) \over \sigma(\bar\rho(y)) }G \right)  \ \ \ \ \ \text{with} \ \ \ G(s,x| s,y)=\delta(x-y).
\end{equation}
This semigroup satisfies also
$$
{d G \over ds} = - D(\rho(s,x)) \Delta_x G 
+ 2 \bar{J}(x) \cdot  \left({\sigma'( \rho(s,x)) \over \sigma(\bar\rho(x)) } \right)  \nabla_x G 
\ \ \ \ \ \text{with} \ \ \ G(s,y| s,x)=\delta(x-y).
$$
Thus the solution of \eqref{eq: perturbative equations h order 3} is given by 
\begin{equation}
\label{eq: h(s)}
h (s,x)  =   - 2  \int_s^\infty  ds' \, \int_\bbD dy \, G(s',y|s,x) \left(  \frac{ \big(\Jstat(y) \big)^2  }{\sigma(\bar \rho (y))^2}    \left(  \sigma'(\rho (s',y) )   -   \frac{D(\rho(s',y) )}{ D(\bar \rho(y))}  \sigma'(\bar \rho(y))  \right) \right) .
\end{equation}
In particular, this determines the value of $h(0,x)$.

\medskip

Recalling identity \eqref{eq: H0 et S} and the definition \eqref{eq: ansatz H d}
of $H$, it is enough to identify the long range part $\cV$  of the large deviation functional $\cS$.
We are going to show that $h(0,x)= \frac{\partial \cV}{\partial \gp(x)} (\gp)$ where the functional $ \cV$ \eqref{eq: definition V bis d} is recalled below
\begin{align}
\label{eq: definition V ter d}
\cV(\gp) & = - 2  \int_0^\infty ds \;
\int_\bbD dx  \frac{  ( \Jstat (x) )^2  }{\sigma(\bar \rho  (x))^2} \left( \sigma(\rho(s,x)) - \sigma(\bar \rho (x))   -  \frac{\Gamma(\rho(s,x))- \Gamma(\bar \rho (x))}{ D(\bar \rho (x))}  \sigma'(\bar \rho  (x)) \right),
\end{align}
with $( \rho (s))_{s \geq 0}$ the solution of \eqref{eq: perturbative  equations rho}.
For a  perturbation of the initial data from $\gp$ to $\gp + \psi$, the solution of \eqref{eq: perturbative  equations rho} has a correction evolving according to the linearised equation 
\begin{align}
\label{eq: linearised density reversal}
\partial_s \phi  =  \Delta ( D(\rho) \phi  ) 
+  2  \Jstat  \cdot  \nabla \left(  \frac{\sigma'(\rho)  }{\sigma(\bar \rho)} \phi \right), %  = L_t^* \phi, 
\qquad \phi(0) = \psi.
\end{align}
%where $L_t^*$ stands for the adjoint of $L_t$ \eqref{eq: Lt}.
%As a consequence, denoting by $G_{s,s'}^*$  the adjoint of $G_{s,s'}$,    
Thus from \eqref{eq: def Green function}, we deduce that the  perturbation of the density is determined by
\begin{align}
\label{eq: linearised density reversal 2}
s \geq 0, \qquad  \phi(s,y) = \int_\bbD dx \,  G(s,y|0,x)  \psi (x) .
\end{align}
Plugging this  in $\cV(\gp)$ \eqref{eq: definition V ter d} leads to 
\begin{align*}
%\label{eq: variation V}
%\int_\bbD dx  \, \psi \, \frac{\partial \cV}{\partial \gp}(\gp)  
\lim_{\gp \to 0} \frac{1}{\gd} & \big( \cV(\gp + \delta \,  \psi ) - \cV(\gp) \big)
 = - 2  \int_0^\infty ds \;
\int_\bbD dy  \frac{  ( \Jstat (y)  )^2  }{\sigma(\bar \rho(y))^2} \left( \sigma' (\rho(s,y))     -  \frac{D(\rho(s,y))}{ D(\bar \rho(y))}  \sigma'(\bar \rho(y)) \right) \phi (s,y)\\
&= - 2  \int_\bbD dx \;  \psi  (x) \int_0^\infty ds \;
\int_\bbD dy \, G(s,y|0,x) \,  \frac{  ( \Jstat (y)  )^2  }{\sigma(\bar \rho(y))^2} \left( \sigma' (\rho(s,y))     -  \frac{D(\rho(s,y))}{ D(\bar \rho(y))}  \sigma'(\bar \rho(y)) \right)\\
%& = - 2 \int_\bbD dx \;  \psi  \int_0^\infty ds \; G_{0,s}
% \frac{  ( \Jstat  )^2  }{\sigma(\bar \rho)^2} \left( \sigma' (\rho(s)) -  \frac{D(\rho(s))}{ D(\bar \rho)}  \sigma'(\bar \rho) \right) \\
& = \int_\bbD dx  \psi (x) \, h(0,x),
\end{align*}
where we used \eqref{eq: h(s)} to identify $h$ in the last equality. 
This implies $h(0,x)= \frac{\delta \cV}{\delta \gp (x)} (\gp)$ in agreement with \eqref{eq: H0 et S}.
This completes the identification of the large deviation functional $\cS$ at order $\bJ^3$.

\section{Cumulants}

From (\ref{eq: local equilibrium  d 1}), (\ref{eq: local equilibrium  d}), 
(\ref{eq: definition V bis d}) and (\ref{eq: LD ordre 3 d}), one can determine at order $\bJ^3$ all the correlations of the densities in the non-equilibrium steady state
for arbitrary
$D(\rho)$ and $\sigma(\rho)$. In this section, to avoid too complicated expressions, we will restrict the discussion to the order $\bJ^2$ and to the particular case where 
\begin{equation}
D(\rho)=1.
\label{D=1}
\end{equation}
Taylor expanding $f$ and $\sigma$ in (\ref{eq: LD ordre 3 d}), we get that at order $\bJ^2$
\begin{align}
\label{eq : S at order J2}
\cS (\gp) = 
 \sum_{k \ge 2} {1 \over k!} 
\int_\bbD dx   
\left[   
  f^{(k)}(\bar\rho(x)) \,   (\gp(x)-\bar{\rho}(x))^k 
   -2  \sigma^{(k)}(\bar{\rho} )  
\, {  \bar{J}^2(x)  \over \sigma(\bar{\rho} )^2} \, 
 I_k(\{\gp(x)\},x)  
\right] ,
\end{align}
where  the coefficients are given by 
\begin{equation}
I_k (\{\gp(x)\},x)= \int_0^\infty  ds \big( \rho(s,x)-\bar{\rho}(x) \big)^k
\quad \text{with} \quad s\geq 0, x \in \bbD, \ \  
\partial_s \rho(s,x) = \Delta \rho(s,x), \ \rho(0) = \gp.
\label{Ik}
\end{equation}
 Above $\rho(s)$ follows equation \eqref{eq: reverse optimal trajectory} at leading order  (indeed the drift at order $\bJ$ contributes only to $\cS$ at order $\bJ^3$ so it can be forgotten).  
It is striking to note that   \eqref{eq : S at order J2} simplifies greatly for $\sigma$ quadratic : in this case 
the long range part of the functional  $\cV$ is quadratic  in $\gp - \bar \rho$ at order $\bJ^2$ (see 
 \eqref{eq : S at order J2} and \eqref{Ik}). 
The only explicit form of the large deviation functional $\cS$ with a genuine long range interaction
 known so far are  for one-dimensional models with this choice of transport coefficients \cite{DLS2,BGL}, like the SSEP or the KMP model.
A notable feature of these models is that only the pair correlations are non-zero at order $\bJ^2$ \cite{DLS5}.
As we shall see, this property is related to the quadratic structure of $\cV$ when $D=1$ and $\sigma$ is quadratic.
For general  $\sigma$, 
the series \eqref{eq : S at order J2}
 may be infinite and all the long range correlation functions are of order $\bJ^2$.

\medskip

In \eqref{eq : S at order J2} all the non-local contributions to $\cS$ come from the integrals $I_k$ which we are going to analyse now.
The density profile in \eqref{Ik} has the form
$$
s\geq 0, x \in \bbD,
\qquad
\rho(s,x) =  \bar \rho(x) + \int_\bbD dy \, G(s,x|0,y) \, \big(\gp(y)-\bar\rho(y)\big),
$$
where the Green function (already introduced in \eqref{eq: def Green function}) is given in 
 the present  case as the solution of
$$
{d G(s,x|\tau,y) \over ds} = \Delta \,  G(s,x| \tau,y) \quad \text{with} \quad G(\tau,x|\tau,y)= \delta (x-y)
$$
and $G(s,x|\tau,y)$ vanishes for $x \in \partial \bbD$.
One then gets  the following expressions of the integrals $I_k$ 
\begin{align}
I_2(x)  & =  \int_\bbD dx_1 r(x_1) \int_\bbD dx_2 
 r(x_2)  \int_0^\infty  ds \, G(s,x|0,x_1) \, G(s,x|0,x_2)
 \label{eq : I2}
\end{align}
and more generally
\begin{equation}
\label{eq : Ik}
I_k(x) =  \int_\bbD dx_1 r(x_1) \cdots  \int_\bbD dx_k 
 r(x_k) \, \int_0^\infty  ds \, G(s,x|0,x_1) \cdots \, G(s,x|0,x_k) ,
\end{equation}
where $$r(x)= \gp(x) - \bar\rho(x).$$
Therefore expanding $\cV$  in powers, we get
\begin{equation}
\cV (\gp) = \sum_{k \ge 2}{1 \over k!}  \int_\bbD dx_1 \, r(x_1)\cdots \int_\bbD d x_k\,r(x_k) 
\cV_k(x_1,\cdots x_k) ,
\label{eq : expansion of V 0}
\end{equation}
with leading order in $\bJ^2$
\begin{align}
 \cV_k(x_1,\cdots x_k)= 
 -2 \int_\bbD dx \, \sigma^{(k)}(\bar{\rho})
\, {  \bar{J}^2 (x)  \over \sigma(\bar{\rho})^2} \,
\int_0^\infty ds \, G(s,x| 0,x_1) \cdots  G(s,x|0,x_k).
\label{eq : def cVk}
\end{align}

We stress that the  series expansion \eqref{eq : expansion of V 0}
 is not perturbative in $r = \gp - \bar\rho$ 
provided the derivatives $\sigma^{(k)}(\bar{\rho})$ decay sufficiently fast so that the series converges. We will use now this expansion to recover the cumulants of the stationary measure.
The link between the non-locality of $\cS$ and the long range correlations is discussed in Appendix 
\ref{Appendix : Non-locality of S and long range correlations}.
From (\ref{eq : 2pts}), (\ref{eq : 3pts}) and (\ref{eq : kpts}), one can determine the long range part
of the truncated correlations  when $x_1,\cdots x_k$ are all different
\begin{equation}
\label{eq : kpts-a}
 \langle \gp(x_1) \gp(x_2 ) \cdots  \gp(x_k) \rangle_c = - {
 \sigma(\bar\rho(x_1)) \cdots \sigma(\bar\rho(x_k))
 \over 2^k \,  L^{(k-1)d}}  
\ \cV_k(x_1 \dots x_k),
\end{equation}
 where we have used that  $f''(\rho) = 2/ \sigma(\rho)$ by (\ref{eq : Einstein relation}) and the fact that $D(\rho)=1$. Finally, setting 
\begin{equation}
\label{eq : G kpts-b}
 \mathbf{G} (x_1 , \cdots  x_k) 
 =
 \int_\bbD dx
\,   \bar{J}^2 (x) 
\int_0^\infty ds
G(s,x| 0,x_1) \cdots  G(s,x|0,x_k),
\end{equation}
one gets
\begin{equation}
\label{eq : kpts-b}
 \langle \gp(x_1) \gp(x_2 ) \cdots  \gp(x_k) \rangle_c =  {
 \sigma(\bar\rho)^{k-2}  \ \sigma^{(k)}(\bar{\rho})
 \over 2^{k-1} \,  L^{(k-1)d}}
\mathbf{G} (x_1 , \cdots  x_k) ,
\end{equation}
since at order $ \bJ^2$ the variations of   $\sigma(\bar\rho(x))$ with space can be neglected.
Defining $\mathbf{\Delta} = \sum_{i=1}^k \Delta_{x_i}$ the Laplacian on the domain $\bbD^k$ with Dirichlet boundary conditions, we deduce by the property of the Green function that 
\begin{equation}
\label{eq : G kpts-b property}
\mathbf{\Delta} \mathbf{G} (x_1 , \cdots , x_k) 
 =  -\bar{J}^2 (x_1)\ \delta_{x_1 = x_2, \cdots , x_k}.
\end{equation}

For $k=2$ and in  dimension $d=1$,  the steady state current   
$\bar{J}^2 (x) = \bJ^2$ is constant  and
the solution of (\ref{eq : G kpts-b property}) is
$$\mathbf{G}(x_1,x_2)={\bJ^2\over 2} \left[ x_1(1-x_2) 1_{x_1 \leq x_2} + x_2(1-x_1)  1_{x_1 > x_2} \right] $$  so that 
%:  as the current  $\bar{J}(x)= \rho_\text{left}-\rho_\text{right}$ is independent on $x$,  one can perform these integrals and one gets
\begin{equation}
\label{eq : kpts-2}
 \langle \gp(x_1) \gp(x_2 )  \rangle_c =  { \sigma''(\bar{\rho})
 \over 4 \,  L}
\,   \bJ^2  \  {x_1 (1-x_2)}  \quad \text{for} \quad  x_1 < x_2 .
\end{equation}
This agrees with the exact computation for the SSEP 
\cite{Derrida 2007, Spohn} as $\sigma (\rho) = 2 \rho (1-\rho)$.

Notice that  when  $\sigma(\rho)$ is quadratic,  as in the SSEP or the KMP model, 
then all  long range correlations  in  \eqref{eq : G kpts-b property}  vanish, at order $\bJ^2 $, except pair correlations \eqref{eq : kpts-2}.
For a similar reason, if $\sigma(\rho)$ is a polynomial of degree $p$ (keeping $D(\rho)=1$), all the long range correlations  vanish if they  involve more  than $p$ points.

\medskip

In general, one cannot  get simpler expressions of the long range correlations 
\eqref{eq : kpts-b}.
Nevertheless using the identity \eqref{eq : G kpts-b property}, one can control the divergence of the cumulants when $x_1 , \cdots , x_k$ get closer. 
Consider for simplicity the one dimensional case $\bbD = [0,1]$ so that $\bf \Delta$ is a Laplacian on $[0,1]^k$ and the function $\bar{J}^2 (x) = \bJ^2$ is  constant. Then 
for $x_1 , \cdots , x_k$ away from the boundaries, the Green function $\mathbf{G}$
\eqref{eq : G kpts-b property} behaves as in $\bbR^k$ and we expect that for $k \geq 3$, it has a singularity of the form 
$$
\int_0^1 dx 
\frac{1}{ \big( \sum_{i =1}^k (x -x_i)^2 \big)^{(k-2)/2}} 
= 
\int_0^1 dx \frac{1}{ \big( k (x -m)^2 + k a\big)^{(k-2)/2}} 
$$
writing 
$$
m = \frac{1}{k} \sum_{i =1}^k x_i, \quad a = \frac{1}{k} \sum_{i =1}^k x_i^2 - m^2 \geq 0,
$$
where the parameter $a$ measures the distance between all the particles.
Thus when $a \to 0$, the singularity (see Figure \ref{fig2}) of the cumulants \eqref{eq : kpts-b}
 should diverge as 
\begin{equation}
\label{eq: asymptotic cumulants}
 \langle \gp(x_1) \gp(x_2 ) \cdots  \gp(x_k) \rangle_c   \ \propto \ 
{ \sigma^{(k)}(\bar{\rho})  \over L^{(k-1)}}
\times 
\begin{cases}
\log (a),  \quad & \text{if} \ k = 3\\
1/a^{(k-3)/2}, \quad & \text{if} \ k \geq  4
\end{cases}
\end{equation}
The sign of the correlation is determined by the macroscopic quantity $\sigma^{(k)}(\bar{\rho})$.
The prefactor depending on $L$ has been kept in \eqref{eq: asymptotic cumulants} to recall that the cumulant vanishes when $L \to \infty$ and that the parameter $a$ is constrained by the mesh size of the lattice so that the
asymptotics \eqref{eq: asymptotic cumulants} can be valid only for $a \gg 1/L^2$.

\begin{figure}[htbp]
\begin{center}
\includegraphics[width=4in]{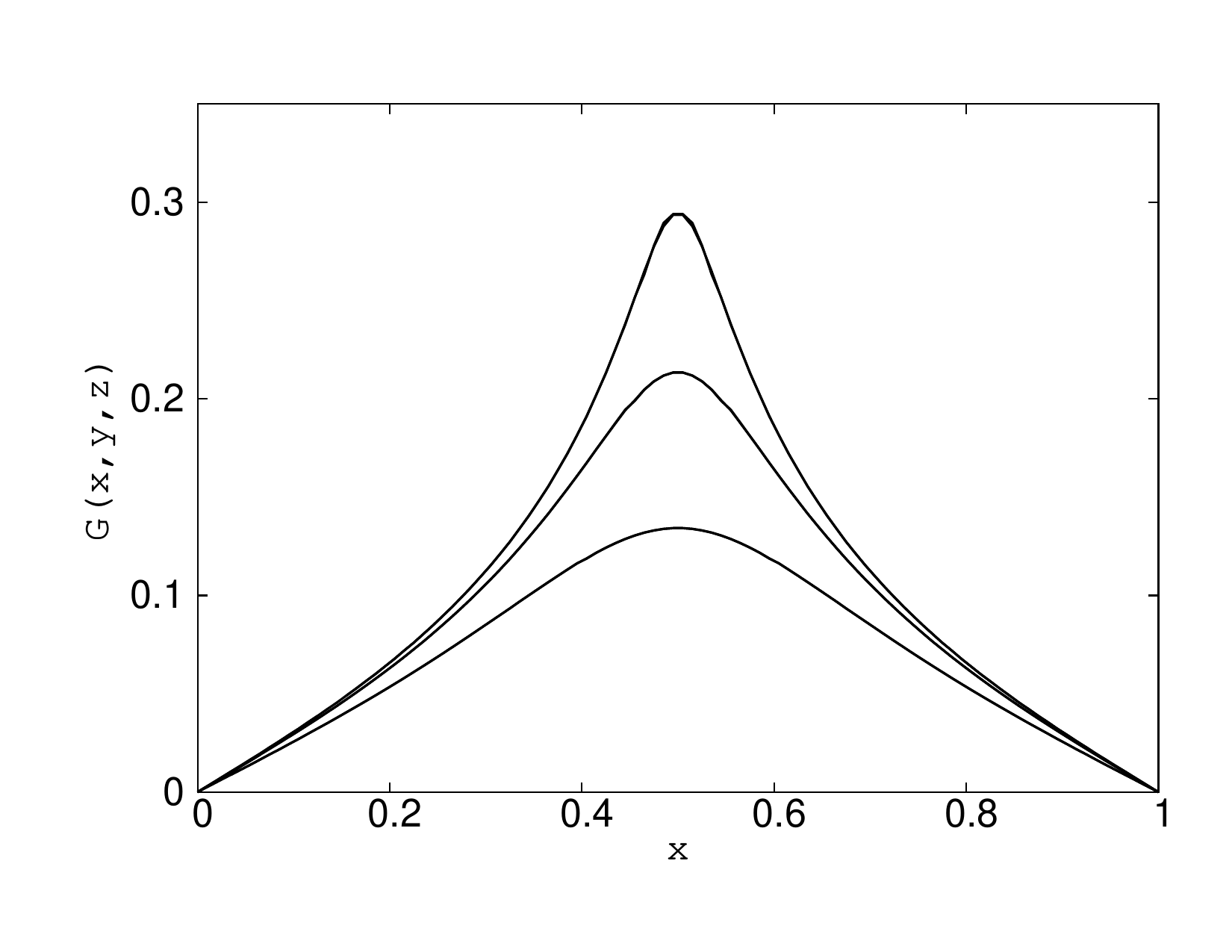} 
\caption{\small  In dimension 1 and for $k=3$ , the solution $G(x,y,z)$ of \eqref{eq : G kpts-b property} for $\bar{J}=1$ as a function of $x$ for $y=.4,z=.6$ (bottom curve), $y=.45,z=.55$ (middle curve) and $y=.475,z=.525$ (top curve). This indicates a singularity as the distance between the 3 points goes to zero, as expected in (\ref{eq: asymptotic cumulants}).
}
\label{fig2}
\end{center}
\end{figure}

\section{Conclusion}

For a diffusive system maintained out of equilibrium by reservoirs at the boundary of a domain $\bbD \subset \bbR^d$, we have considered here a regime close to equilibrium, i.e. a small drive from the reservoirs so that the flux flowing through the system in the stationary state is  of order $\bJ \ll 1$. 
Our main result \eqref{eq: LD ordre 3 d}, \eqref{eq: definition V bis d} is a perturbative analysis in $\bJ$ of the large deviation functional
for the stationary state.
So far the only expressions of the large deviation functional  had been computed for microscopic dynamics with specific transport coefficients. 
As stated in Claim \ref{Claim: HJ}, the large deviation functional for observing a macroscopic density profile $\gp$ in the stationary  state is of the form $\cS (\gp) = \cS^{local} (\gp)  + \cV(\gp) + O (\bJ^4)$ 
(see \eqref{eq: LD ordre 3 d}). The leading order $\cS^{local}$ has the same structure as the equilibrium large deviation function and in particular, it is local.
The functional $\cV$, defined in \eqref{eq: definition V bis d}, is non local : it encodes the non-equilibrium contributions at orders $\bJ^2$ and $\bJ^3$. 
The analysis relies on the macroscopic fluctuation theory, which  is expected to capture accurately features of diffusive microscopic dynamics at the macroscopic level. In particular, the long range correlations between $k$ sites at order 
$\bJ^2$ are predicted in \eqref{eq : kpts-b}. It is remarkable that these correlations diverge near the diagonal  as in \eqref{eq: asymptotic cumulants}.

Our perturbative analysis is limited to the first orders of the expansion in the flux $\bJ$. It would be interesting to try to extend it to higher orders. We expect that a similar perturbative approach can be also achieved for   diffusive dynamics driven away from equilibrium by a small varying external field in the bulk. 
Due to the non-local structure emerging in non-equilibrium systems, the large deviation functional is not  additive. For the SSEP, an additivity property  was  shown to hold in \cite{DLS2} and 
it is natural to ask whether  a similar structure could be derived for models with more general transport coefficients, at least in perturbative regimes.

\appendix

\section{Transport coefficients}
\label{Appendix : Transport coefficients}

In the fluctuating hydrodynamic equation \eqref{eq: fluctuating hydro}, the macroscopic transport coefficients $D,\sigma$   quantify the behavior of the macroscopic current. At the microscopic level, they can be interpreted as follows.
To fix a simple framework, consider a one dimensional lattice gas model on $\{ 1, \dots, L\}$ such that at each site $i$, the  number of particles is an integer $\eta_i$. These particles move according to a stochastic dynamics and the configuration at time $t$ is given by $(\eta_i(t), \  1 \leq i \leq L)$. In this work, we have in mind a system with a conservative bulk dynamics coupled to a creation-annihilation dynamics at the boundary acting as reservoirs imposing some prescribed densities.
On $\{1, \dots, L\}$, this means that the density at sites $1$ (resp. $L$) is $\rho_\text{left}$ (resp.  $\rho_\text{right}$). 
In particular, particles may enter into the system or be killed at sites $1, L$. 
Let $Q_t$ be the integrated current up to time $t$, i.e. the number of particles which 
go through the system during the time interval $[0,t]$ 
(note that given $L$,  the current can be measured anywhere in the bulk, and its asymptotic behavior  for $t$ large will be the same).
Suppose that the densities imposed by the reservoirs are 
$\rho_\text{left} = \rho + \Delta \rho $ and $\rho_\text{right}= \rho $ with 
$\Delta \rho$ small, then this  induces a flux of particles through the system and according to 
Fick's law, the current in the steady state is given at first order in $\Delta \rho$ by 
\begin{eqnarray}
\label{eq: diffusivite}
{\bbE [ Q_t ] \over t} = {1 \over L} D(\rho) \  \Delta \rho \, .
\end{eqnarray}
This defines the diffusion coefficient $D(\rho)$ at density $\rho$.
Suppose  that for $\rho_\text{left}= \rho_\text{right}=\rho$, then 
 on average
there is no flux through the system $\bbE [ Q_t ] =0$, 
nevertheless, the microscopic current fluctuates and  for large $t$, its variance is given asymptotically by 
\begin{eqnarray}
\label{eq: conductivite}
{\bbE[ Q_t^2 ] \over t} \simeq \frac{1}{L} \sigma(\rho) \, .
\end{eqnarray}
This defines the conductivity coefficient $\sigma(\rho)$ at density $\rho$.
We refer to \cite{Spohn book} for an in-depth discussion on these transport coefficients.
For  generic stochastic dynamics, there is no simple expression for the transport coefficients $D, \sigma$ as they are usually obtained implicitly in terms of a variational formula (see \cite[Chapter 7]{KL} and \cite{Arita Krapivsky Mallick}).
For some specific microscopic dynamics, known as gradient models \cite{KL},  the transport coefficients can be computed explicitly. This is for example the case of  the symmetric exclusion process 
($D(\rho)=1$ and $\sigma(\rho) = 2 \rho (1-\rho)$) or of the Zero-Range process ($D(\rho) = \sigma'(\rho)/2$).

\medskip

We conclude this appendix by deriving the Einstein relation \eqref{eq : Einstein relation} (see \cite[Part II, Eq. (2.72)]{Spohn book} or \cite[Section 11]{Derrida 2007} for alternative derivations).
Recall  that the large deviation functional $\cS$ should satisfy
the Hamilton Jacobi equation  (see  Claim \ref{prop: HJ} in  Appendix  
\ref{Appendix : Equations for the optimal profiles}
which was originally derived in \cite{BDGJL2}),
\begin{equation}
\label{eq: HJ1}
\int_\bbD   dx  \, \sigma (\gp) \Big(   \nabla \frac{\delta \cS}{\delta \gp} \Big)^2
- \int_\bbD dx   \,  (2 D( \gp ) \nabla \gp) \cdot  \nabla \frac{\delta \cS}{\delta \gp } = 0 .
\end{equation}
At equilibrium, the large deviation functional  is explicit
(\ref{eq: LD equilibrium}) and its functional derivative is given by 
\begin{equation}
\frac{\delta \cS}{\delta \gp}(\gp) = f'(\gp)-f'(\rho_*).
\label{eq: H equilibrium}
\end{equation}
Thus (\ref{eq: HJ1}) becomes for any profile $\gp$
\begin{equation}
\label{eq: HJ2}
\int_\bbD   dx  \, \sigma (\gp)  f''(\gp)^2  (\nabla \gp)^2
- \int_\bbD dx   \,  (2 D( \gp ) f''(\gp) (\nabla \gp)^2 = 0.
\end{equation}
This implies that the Einstein relation \eqref{eq : Einstein relation} should hold:
\begin{equation}
%\label{eq : Einstein relation}
f''(\rho) \, \sigma(\rho)= 2 D(\rho).
\end{equation}

\section{Non-locality of $\cS$ and long range correlations}
\label{Appendix : Non-locality of S and long range correlations}

In this appendix, we  compute  the correlations of the density in the steady state from the knowledge   of  the functional $ \cS$ (\ref{eq: LD d}). To do so,  let us introduce $\cG$ the generating  function  of the density $\gp$ defined   in the large $L$ limit by
\begin{equation}
\label{eq: def Legendre G}
\cG(\{\lambda(x)\}) = \lim_{L \to \infty} { 1 \over L^d}  \log \left\langle \exp \left[L^d \int \lambda(x) \gp(x)  dx \right] \right\rangle .
\end{equation}
All the correlation functions can be obtained by taking successive derivatives of $\cG$ (assuming that these derivatives exist in the large $L$ limit)
\begin{align}
\langle \gp(x) \rangle = & \bar\rho(x)=  \left. {\delta \cG \over \delta \lambda(x)} \right|_{\lambda=0} ,
\nonumber \\
\langle \gp(x) \gp(y) \rangle_c = &  {1 \over L^d} \left. {\delta^2 \cG \over \delta  \lambda(x) \,  \delta \lambda(y) } \right|_{\lambda=0} , 
\label{eq: derivee cumulants}
\\
\langle \gp(x) \gp(y) \gp(z) \rangle_c = & {1 \over L^{2d}} \left. {\delta^3 \cG \over \delta  \lambda(x) \, \delta \lambda(y) \, \delta \lambda(z)} \right|_{\lambda=0} . \nonumber
\end{align}
The generating function $\cG$ is related to  the large deviation functional $\cS$ 
 by Varadhan’s Lemma
\cite[Chap 4.3]{Dembo_Zeitouni}
\begin{equation}
\label{eq: def Legendre G}
\cG( \lambda ) =\max_{ \{ \gp(x) \} }\left[ -\cS( \gp ) +  \int  \lambda(x) \gp(x)   dx \right] .
\end{equation}
Our goal is to recover the cumulants \eqref{eq: derivee cumulants} in terms of the functional $\cS$.
To study the derivates \eqref{eq: derivee cumulants} at $\lambda =0$,  it is enough to consider perturbations of $\cS$ around the steady state profile $\bar \rho$.
Let $r$ be the fluctuation  
$$r(x)=\gp(x)-\bar\rho(x)$$
and  assume that $\cS$ expands  in powers of $r$, near the steady state profile 
\begin{equation}
\label{eq : expansion S}
\cS (\gp) - \cS (\bar\rho)= {1 \over 2} \int_\bbD dx \int_\bbD dy \, \cS_2(x,y)  r(x) r(y) + {1 \over 6}  
 \int_\bbD dx \int_\bbD dy  \int_\bbD dz \,  \cS_3(x,y,z)  r(x) r(y) r(z)   + O(r^4)
\end{equation}
 (one can always choose  $S_2, S_3$ to be  symmetric functions).
Let $\cR_2$ be the inverse of the operator $\cS_2$
$$\int_\bbD dz \,  S_2(x,z)  \cR_2(z,y) = \delta(x-y).$$
Then for $\lambda$ small, the optimal value of $r$ in \eqref{eq: def Legendre G} satisfies \begin{equation}
\lambda(x) = \int_\bbD dy \, \cS_2(x,y)  r(y)+ \frac{1}{2} \int_\bbD dy  \int_\bbD dz \,  \cS_3(x,y,z) r(y) r(z)
+ O(r^3),
\end{equation}
so that at second order in $\lambda$
\begin{align*}
%r  = \cR_2 \lambda + \cR_2 \cS_3 (\cR_2 \lambda, \cR_2 \lambda)
r (x) = &  \int_\bbD d u_1 \, \cR_2(x,u_1)  \lambda(u_1)\\
& - \frac{1}{2} \int_\bbD dy  \int_\bbD dz \int_\bbD du_1  \, \cR_2(x,u_1)  \cS_3(u_1,y,z) \int_\bbD du_2 \cR_2(y ,u_2)  \lambda(u_2) \int_\bbD du_3  \cR_2(z,u_3)  \lambda(u_3).
\end{align*}
From \eqref{eq: def Legendre G}, we know that $\frac{\partial \cG}{\partial \lambda (x)} = r(x) + \bar \rho(x)$  and we can deduce from the previous expression of $r$ an expansion of $\cG$ at the 3rd order in $\lambda$.
%Plugging this in \eqref{eq: def Legendre G} leads to an expansion of $\cG$ at the 3rd order in $\lambda$.
The cumulants are then obtained as the coefficients of this expansion
\begin{equation}
\label{eq : 2pts-a}
 \langle \gp(x) \gp(y) \rangle_c =  {1 \over L^d} \cR_2(x,y),
\end{equation}
\begin{equation}
 \langle \gp(x) \gp(y) \gp(z) \rangle_c = - {1 \over L^{2d}} \int_\bbD du_1 \int_\bbD d u_2 \int_\bbD du_3 \ \cR_2(x,u_1) \cR_2(y,u_2) R_2(z,u_3) \, \cS_3(u_1,u_2,u_3).
\label{eq : 3pts-a}
\end{equation}
More generally, denoting by $\cS_k$ the operator associated with the order $k \geq 3$ in \eqref{eq : expansion S}, then one can show that  when  $x_1  \dots  x_k$ are all different
\begin{equation}
 \langle \gp(x_1) \dots  \gp(x_k ) \rangle_c = - {1 \over L^{(k-1)d}} 
 \int_{\bbD^k} du_1 \dots  du_k  \, \cS_k(u_1, \dots ,u_k) \prod_{i =1}^k \cR_2(x_i,u_i) .
\label{eq : kpts-a 0}
\end{equation}

\medskip

In general, this leads to rather complicated formulas  for the correlations because one needs to invert the operator $\cS_2$. 
In the perturbative regime $\bJ \ll 1$, the main contribution of the functional $\cS$ \eqref{eq : S at order J2} has a local structure and the non local part occurs only at order  $\bJ^2$ 
 (see  \eqref{eq : S at order J2}) :
$$
\cS_2(x,y)=  f''(\bar\rho(x)) \delta(x-y) +  \cV_2(x,y),
$$
with $ \cV_2$ as in \eqref{eq : expansion of V 0}.
As $\cS_2$ is almost diagonal, one has at order $\bJ^2$
$$R_2(x,y)= {\delta(x-y) \over  f''(\bar\rho(x))}  - {\cV_2(x,y ) \over  f''(\bar\rho(x)) f''(\bar\rho(y))}, 
$$ 
so that (\ref{eq : 2pts-a}) becomes
\begin{equation}
\label{eq : 2pts}
 \langle \gp(x) \gp(y) \rangle_c = {1 \over L^d}  \left[ 
{\delta(x-y) \over  f''(\bar\rho(x))}  - {\cV_2(x,y ) \over  f''(\bar\rho(x)) f''(\bar\rho(y))} \right].
\end{equation}
Recalling notation \eqref{eq : S at order J2}, \eqref{eq : expansion of V 0}, the 3rd order reads 
$$\cS_3(x,y,z)=  f'''(\bar\rho(x)) \delta(x-y)\delta(x-z) +  \cV_3(x,y,z)$$
%where 
%\begin{equation}
%\cV = {1 \over 2} \int_\bbD dx \int_\bbD dy\,  \cV_2(x,y)  r(x) r(y) + {1 \over 6}  
% \int_\bbD dx \int_\bbD dy  \int_\bbD dz  \, \cV_3(x,y,z)  r(x) r(y) r(z)   + O(r^4) 
%\label{eq : expansion of V}
%\end{equation}
% and because $\cS_2$ is almost diagonal one has at order $\bJ^3$
%$$R_2(x,y)= {\delta(x-y) \over  f''(\bar\rho(x))}  - {\cV_2(x,y ) \over  f''(\bar\rho(x)) f''(\bar\rho(y))} $$ 
%so that (\ref{eq : 2pts-a}) 
and at order $\bJ^2$, (\ref{eq : 3pts-a}) is given by
\begin{align}
 \langle \gp(x) \gp(y) \gp(z) \rangle_c = - {1 \over L^{2d}} & \left[  {f'''(\bar\rho(x)) \delta (x-y) \delta(x-z) \over f''(\bar\rho(x))^3}-
{f'''(\bar\rho(y)) \cV_2(x,y) \delta(y-z) \over f''(\bar\rho(x)) \, f''(\bar\rho(y))^3 } \right.
\nonumber
\\ 
& \ \ \
-{f'''(\bar\rho(z)) \cV_2(y,z) \delta(z-x) \over f''(\bar\rho(y)) \, f''(\bar\rho(z))^3 }
-{f'''(\bar\rho(x)) \cV_2(z,x) \delta(x-y) \over f''(\bar\rho(z)) \, f''(\bar\rho(x))^3 }
\nonumber \\ 
& \ \ \ \left. +  {\cV_3(x,y,z) \over  f''(\bar\rho(x)) f''(\bar\rho(y)) f''(\bar\rho(z))}
\right]. 
\label{eq : 3pts}
\end{align}
%So if we know the expansion (\ref{eq : expansion of V}) around the steady state profile $\bar\rho(x)$, equations (\ref{eq : 2pts}) and (\ref{eq : 3pts}) give us the two and three point correlations at order $\bJ^3$.
More generally by using formula \eqref{eq : kpts-a 0}, higher order cumulants can be obtained  from the knowledge of the expansion \eqref{eq : expansion of V 0} of $\cV$ 
\begin{equation}
\label{eq : kpts}
x_1 \not = \cdots \not = x_k, \qquad 
 \langle \gp(x_1) \gp(x_2 ) \cdots  \gp(x_k) \rangle_c = - {1 \over L^{(k-1)d}}  
{\cV_k(x_1 \dots x_k) \over  f''(\bar\rho(x_1)) \cdots f''(\bar\rho(x_k))}.
\end{equation}

\section{Equations for the optimal profiles}
\label{Appendix : Equations for the optimal profiles}

In this appendix, we  recall one way of obtaining the basic equations \eqref{eq: quasipotential}, \eqref{eq: evolution rho}, \eqref{eq: evolution H},
as well as the Hamilton-Jacobi equation 
(\ref{eq: HJ1}) which are standard in the macroscopic fluctuation theory.
\medskip

Let us first show that the optimal current has to be of the form (\ref{eq: optimal current}):
for any current $q(t,x)$ and density $\rho(t,x)$,  one can 
define a scalar function $H(x,t)$ as 
the unique solution (assuming $\gs(\rho)>0$) of
$$\nabla \cdot \big(\sigma(\rho)  \nabla H \big)= \nabla \cdot \big( q + D(\rho)  \nabla \rho \big)
\quad \quad \text{with} \quad \quad  H(t,y)=0  \ \ \ \text{ for}  \ \ \ y \in \partial \bbD  \ .$$ 
Then  one can write the current as 
$$q = - D \big( \rho  \big) \nabla \rho +  \sigma \big( \rho  \big) \nabla H + A$$
where 
 the vector $A(t,x)$   satisfies
$$\nabla \cdot A=0 .$$

Inserting this expression for $q$ into (\ref{eq: LD hydro1}), one gets
\begin{align*}
\cF_{[\tau_1,\tau_2]} (\rho) & =  \inf_q \int_{\tau_1}^{\tau_2} dt \int_\bbD dx 
\frac{\big(q (t,x) + D \big( \rho (t,x) \big) \nabla \rho (t,x) \big)^2}{2 \sigma \big( \rho (t,x) \big)} \\
& =  \inf_A \int_{\tau_1}^{\tau_2} dt \int_\bbD dx 
\frac{\big(\sigma \big( \rho \big) \nabla H + A \big)^2}{2 \sigma \big( \rho  \big)} \\
& = \inf_A \int_{\tau_1}^{\tau_2} dt \int_\bbD dx \left[
\frac{\sigma ( \rho )(\nabla H )^2}{2} 
+  \nabla H  \cdot A  + \frac{\big(A \big)^2}{2 \sigma \big( \rho \big)} \right] .
\end{align*}
It is easy to see that the cross term vanishes (using, after an integration by parts, that $H(t,x)$ vanishes at the boundary of $\bbD$  and the fact that  $\nabla \cdot A=0$). It is then clear that the minimum is  achieved for $A=0$. 
This justifies the choice of the decomposition (\ref{eq: optimal current}) and 
the form of the dynamical cost (\ref{eq: cost variational principle}).

\medskip

Given a density profile  $\rho_{\tau_1}$ at time $\tau_1$, we look for the optimal cost for observing the density $\gp$ at time $\tau_2$ 
\begin{equation}
\label{eq: cost variational principle v0}
\cF_{[\tau_1,\tau_2]} (\gp) = \inf_H \frac{1}{2} \int_{\tau_1}^{\tau_2} dt \int_\bbD dx \,  \sigma \big( \rho(t,x) \big) ( \nabla H (t,x) )^2 ,
\end{equation}
where $\rho$ follows an equation  (see (\ref{eq: conservation law},\ref{eq: optimal current}))  depending on the control function $H$
\begin{equation}
\label{eq: rho et H}
t \in [\tau_1,\tau_2], \qquad 
\partial_t \rho  = \nabla ( D(\rho) \nabla  \rho) - \nabla ( \sigma(\rho) \nabla  H),
\end{equation}
starting at $\rho (\tau_1)$ at time $\tau_1$ and at the boundary $\partial \bbD$, the density $\rho(t)$ is fixed equal to $\rho_\text{b}$ and $H(t)$ is equal to 0.
The control function $H$ is chosen such that $\rho (\tau_2) = \gp$.

\bigskip

\begin{Claim}
\label{Claim: trajectoire optimale}
Fix  $\rho(\tau_1)$  and $\rho(\tau_2) = \gp$ then  the optimal profile equations satisfy 
\begin{equation}
\label{eq: evolution equations bis}
t \in [\tau_1,\tau_2], \qquad 
\begin{cases}
\partial_t \rho  = \nabla \cdot ( D(\rho) \nabla  \rho) - \nabla \cdot  ( \sigma(\rho) \nabla  H) , \\
\partial_t H  = -  D(\rho) \Delta  H - \frac{1}{2} \sigma'(\rho) ( \nabla  H)^2,
\end{cases}
\end{equation}
with boundary terms $\rho(t,y) = \rho_\text{b}(y)$ and $H(t,y) =0$ for $y \in \partial \bbD$.
Furthermore
\begin{equation}
\label{eq: cF et H appendix}
H (\tau_2) = \frac{\delta \cF_{[\tau_1,\tau_2]}}{\delta \gp} (\gp) .
\end{equation}
\end{Claim}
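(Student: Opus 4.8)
The statement is the classical first-order optimality (Euler--Lagrange) analysis for the controlled variational problem \eqref{eq: cost variational principle v0}--\eqref{eq: rho et H}, so the plan is to introduce a Lagrange multiplier field enforcing the constraint \eqref{eq: rho et H} and to read off the stationarity conditions. Concretely, I would write
\begin{equation*}
\cL = \frac12 \int_{\tau_1}^{\tau_2}\!\! dt \int_\bbD dx\, \sigma(\rho)(\nabla H)^2 + \int_{\tau_1}^{\tau_2}\!\! dt \int_\bbD dx\, p\,\Big(\partial_t \rho - \nabla\cdot(D(\rho)\nabla\rho) + \nabla\cdot(\sigma(\rho)\nabla H)\Big),
\end{equation*}
where $p(t,x)$ is the multiplier, and then compute the three variations: with respect to $H$ (at fixed endpoints), with respect to $\rho$ on the open time interval, and with respect to the terminal profile $\rho(\tau_2)=\gp$.

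\textbf{Key steps.} First, varying $\cL$ in $H$ with $\delta H=0$ on $\partial\bbD$ and integrating by parts in space gives $\sigma(\rho)\nabla H = \sigma(\rho)\nabla p$, i.e. one may identify $p=H$ up to an additive function constant in $x$ which the Dirichlet condition kills; this is the step that collapses the adjoint variable onto $H$ itself and is what makes the system \eqref{eq: evolution equations bis} closed. Second, varying in $\rho$ on $(\tau_1,\tau_2)$ with $\delta\rho$ vanishing at both time endpoints and on $\partial\bbD$, integrating by parts in $t$ and $x$, and using $p=H$, one collects the terms $\tfrac12\sigma'(\rho)(\nabla H)^2$ from the running cost and the terms $-\partial_t H$, $D(\rho)\Delta H$ (plus a piece that cancels after integrating the $\nabla\cdot(D(\rho)\nabla\rho)$ and $\nabla\cdot(\sigma(\rho)\nabla H)$ contributions by parts) from the constraint, which yields exactly $\partial_t H = -D(\rho)\Delta H - \tfrac12\sigma'(\rho)(\nabla H)^2$. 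The first line of \eqref{eq: evolution equations bis} is just the constraint \eqref{eq: rho et H} rewritten with $p=H$. Third, for \eqref{eq: cF et H appendix}: the only boundary-in-time term surviving the integration by parts in $t$ from the $\rho$-variation is $\big[p\,\delta\rho\big]_{\tau_1}^{\tau_2} = p(\tau_2)\,\delta\gp$ (the $\tau_1$ term dies because $\rho(\tau_1)$ is fixed), and since along the optimal path $\cF_{[\tau_1,\tau_2]}(\gp)=\cL$, the envelope theorem gives $\delta\cF/\delta\gp(x) = p(\tau_2,x) = H(\tau_2,x)$.

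\textbf{Main obstacle.} The genuinely delicate point is the $\rho$-variation: one must carefully track the two spatial integrations by parts coming from $\nabla\cdot(D(\rho)\nabla\rho)$ and $\nabla\cdot(\sigma(\rho)\nabla H)$ acting against $\delta\rho$ versus against $p$, keep the Dirichlet conditions $\delta\rho=0$ and $H=0$ on $\partial\bbD$ straight so that all boundary flux terms vanish, and verify that the $D$-dependent pieces combine into the clean $-D(\rho)\Delta H$ rather than something involving $\nabla D \cdot \nabla H$; this requires writing $\nabla\cdot(D(\rho)\nabla\delta\rho) + \nabla\cdot(D'(\rho)\delta\rho\,\nabla\rho)$ and pairing against $p$, then moving all derivatives off $\delta\rho$. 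Everything else is bookkeeping. I would also note in passing that this is a formal (variational) derivation — the existence and regularity of optimizers is not addressed here, consistent with the paper's level of rigor — and that the resulting system is precisely the Hamiltonian flow for $\cH(\rho,H)=\int_\bbD\big(-D(\rho)\nabla\rho\cdot\nabla H + \tfrac12\sigma(\rho)(\nabla H)^2\big)dx$, which can be cited as a sanity check and is the route to the Hamilton--Jacobi equation \eqref{eq: HJ1} via $H(\tau_2)=\delta\cS/\delta\gp$ in the $\tau_1\to-\infty$ limit.
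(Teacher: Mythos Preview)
Your proposal is correct and complete, but it follows a different route from the paper's proof. The paper does \emph{not} introduce a Lagrange multiplier. Instead it works directly with the reduced problem: it perturbs the control $H\to H+h$, which induces a perturbation $\psi$ of the state via the linearised constraint
\[
\partial_t\psi = \Delta(D(\rho)\psi) - \nabla\cdot(\sigma'(\rho)\psi\nabla H) - \nabla\cdot(\sigma(\rho)\nabla h),
\]
then computes the first variation of the cost, uses this linearised equation to replace $-\nabla\cdot(\sigma(\rho)\nabla h)$ by $\partial_t\psi-\Delta(D(\rho)\psi)+\nabla\cdot(\sigma'(\rho)\psi\nabla H)$, and integrates by parts in $t$ and $x$ to reach
\[
\int_{\tau_1}^{\tau_2}\!\!\int_\bbD \psi\Big(-\partial_t H - D(\rho)\Delta H - \tfrac12\sigma'(\rho)(\nabla H)^2\Big) + \big[H\psi\big]_{\tau_1}^{\tau_2}.
\]
With $\psi(\tau_1)=\psi(\tau_2)=0$ this gives the $H$-equation; with $\psi(\tau_2)=\delta\gp$ it gives \eqref{eq: cF et H appendix}.

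Your Pontryagin/adjoint approach is the more systematic of the two: by treating $\rho$ and $H$ as independent and introducing $p$, you avoid having to chase the induced perturbation $\psi$ through the linearised dynamics. The price is the extra step of identifying $p=H$, which is precisely where the special structure (cost quadratic in the control, constraint linear in the control) enters; the paper's substitution trick exploits the same structure without naming it. Both routes converge on the same spatial integration by parts for the $D$-terms, and you are right that the cancellation producing a clean $D(\rho)\Delta H$ is the one place demanding care. Your Hamiltonian-flow remark is a useful sanity check that the paper does not make explicit.
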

\begin{proof}
Suppose that $H$ is an optimal control associated with the density $\rho$ by \eqref{eq: rho et H}, then we consider a perturbation of the form $H+h$ which leads to a perturbation of the trajectory $\rho + \psi$. Using \eqref{eq: rho et H}, this means that 
$\psi$ is given by 
\begin{equation}
\label{eq: psi et h}
\partial_t \psi  = \Delta ( D(\rho)  \psi) - \nabla \cdot ( \sigma'(\rho) \psi \nabla  H) -
 \nabla \cdot ( \sigma(\rho) \nabla  h).
\end{equation}
This perturbation modifies the dynamical large deviation function in \eqref{eq: cost variational principle v0} as follows
\begin{align}
\label{eq: cost variational principle h}
&\int_{\tau_1}^{\tau_2}  dt \int_\bbD dx \,   \Big(  \sigma ( \rho)  \nabla H  \cdot \nabla h
+ \frac{1}{2} \sigma'(\rho) \psi (\nabla  H)^2 \Big)  \nonumber\\
& \quad = \int_{\tau_1}^{\tau_2} dt  \int_\bbD dx \,   \Big( - H \,  \nabla \cdot ( \sigma ( \rho)    \nabla h )
+ \frac{1}{2} \sigma'(\rho) \psi (\nabla  H)^2   \Big) \nonumber  \\
& \quad
= \int_{\tau_1}^{\tau_2} dt \int_\bbD dx \,
  \Big(   \Big[ \partial_t \psi  - \Delta ( D(\rho)  \psi)
+ \nabla \cdot ( \sigma'(\rho) \psi \nabla  H) \Big]
+ \frac{1}{2} \sigma'(\rho) \psi (\nabla  H)^2 \Big)
   \nonumber \\
& \quad
= \int_{\tau_1}^{\tau_2} dt \int_\bbD dx \,
\psi
\Big( - \partial_t  H -  D(\rho)  \Delta H - \frac{1}{2} \sigma'(\rho)  (\nabla  H)^2 \Big)
 + \int_\bbD dx \,  H( \tau_2) \psi (\tau_2)  - \int_\bbD dx \,  H(\tau_1) \psi (\tau_1) ,
\end{align}
where we used \eqref{eq: psi et h} in the 3rd line and  integrations by parts in the last equality.

As the density $\rho$ is fixed at times $\tau_1$ and $\tau_2$,  one has 
$ \psi (\tau_1) = \psi (\tau_2) =0$. Since \eqref{eq: cost variational principle h} should be true for all $\psi$, we have recovered the optimal equations \eqref{eq: evolution equations bis}.

\medskip

We turn now to the proof of \eqref{eq: cF et H appendix} and consider a variation at the final time $\gp \to \gp + \delta \gp$.
Then \eqref{eq: cost variational principle h} gives at first order in $\delta \gp$
\begin{align*}
%\label{eq: cost variational principle h}
& \cF_{[\tau_1,\tau_2]} (\gp + \delta \gp) - \cF_{[\tau_1,\tau_2]} (\gp) \\
& = 
\int_{\tau_1}^{\tau_2} \int_\bbD dx \,    
\psi \Big( - \partial_s  H -  D(\rho)  \Delta H - \frac{1}{2} \sigma'(\rho)  (\nabla  H)^2 \Big)
+ \int_\bbD dx \,  H( \tau_2) \psi (\tau_2)  - \int_\bbD dx \,  H(\tau_1) \psi (\tau_1)  \\
& = 
\int_\bbD dx \,   H(\tau_2) \delta \gp ,
\end{align*}
where we used that $\psi (\tau_2) = \delta \gp$, $\psi (\tau_1) = 0$ and that $H$ is solution of \eqref{eq: evolution equations bis}.
By definition of the functional derivative, we get also at first order in $\delta \gp$
\begin{align*}
\cF_{[\tau_1,\tau_2]} (\gp + \delta \gp) - \cF_{[\tau_1,\tau_2]} (\gp) &= 
\int_\bbD dx \,  \frac{\delta \cF_{[\tau_1,\tau_2]}}{\delta \gp} (\gp) \, \delta \gp .
\end{align*}
Thus \eqref{eq: cF et H appendix} is proved by combining both identities  since $\delta \gp$ is arbitrary.
\end{proof}

The following result was  derived in \cite[Eq. (2.12)]{BDGJL2}.
\begin{Claim}
\label{prop: HJ}
For $\tau_1 = -\infty$ and $\rho_{-\infty} = \bar \rho$, then $\cS  (\gp) := \cF_{]-\infty,0]}  (\gp) $ is time independent and satisfies the following Hamilton-Jacobi equation :  for any density profile $\gp$
\begin{equation}
\label{eq: full HJ}
\int_\bbD   dx  \, \sigma (\gp) \Big(   \nabla \frac{\delta \cS}{\delta \gp} \Big)^2 
- \int_\bbD dx   \,  (2 D( \gp ) \nabla \gp)  \nabla \frac{\delta \cS}{\delta \gp } = 0.
\end{equation}
\end{Claim}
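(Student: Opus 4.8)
The plan is to establish the two parts of the claim in order: first that $\cS := \cF_{]-\infty,0]}$ is time-independent, and then, using this, to derive \eqref{eq: full HJ} by computing the rate of change of $\cS$ along the optimal relaxation path in two different ways and comparing.

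\emph{Time-independence.} The cost functional \eqref{eq: LD hydro1}, the conservation law \eqref{eq: conservation law} and the reservoir data $\rho_\text{b}$ are all invariant under time translations, so the shift $t \mapsto t+T$ is a cost-preserving bijection between admissible trajectories on $]-\infty,0]$ (with $\rho(-\infty)=\bar \rho$, $\rho(0)=\gp$) and those on $]-\infty,T]$ with the same endpoint data. Hence $\cF_{]-\infty,T]}(\gp)=\cF_{]-\infty,0]}(\gp)=:\cS(\gp)$ for every $T$, i.e. $\cS$ depends only on the profile. Feeding this back into \eqref{eq: cF et H appendix} with $\tau_1=-\infty$ and $\tau_2=t$, evaluated at the endpoint value $\rho(t)$, reproduces the relation \eqref{eq: optimal H}, namely $H(t)=\frac{\delta\cS}{\delta\gp}(\rho(t))$ along the optimal path, which will be the main input below.

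\emph{Hamilton--Jacobi.} Fix an admissible $\gp$ and let $(\rho(t),H(t))_{t\le0}$ be the optimizer of \eqref{eq: evolution equations bis} with $\rho(0)=\gp$, $\rho(-\infty)=\bar \rho$. By optimality of the tail and time-translation invariance, the restriction of this path to $]-\infty,t]$ is optimal for the final profile $\rho(t)$, so the analogue of \eqref{eq: cost variational principle} on $]-\infty,t]$ and on $]-\infty,0]$, subtracted, gives $\cS(\gp)-\cS(\rho(t))=\frac12\int_t^0 dt'\int_\bbD dx\,\sigma(\rho(t'))(\nabla H(t'))^2$. Assuming the optimizer is regular enough, as is customary in the macroscopic fluctuation theory, differentiate in $t$ to get
\[
\frac{d}{dt}\cS(\rho(t))=\frac12\int_\bbD dx\,\sigma(\rho(t))\,(\nabla H(t))^2 .
\]
On the other hand, by the chain rule and \eqref{eq: optimal H}, $\frac{d}{dt}\cS(\rho(t))=\int_\bbD dx\,H(t)\,\partial_t\rho(t)$; inserting the density equation of \eqref{eq: evolution equations bis} for $\partial_t\rho$ and integrating by parts — the boundary terms vanish since $H(t,y)=0$ on $\partial\bbD$ — this equals $-\int_\bbD dx\,D(\rho(t))\nabla\rho(t)\cdot\nabla H(t)+\int_\bbD dx\,\sigma(\rho(t))(\nabla H(t))^2$. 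Equating the two expressions and cancelling $\frac12\int_\bbD\sigma(\nabla H)^2$ yields, for every $t\le0$, $\int_\bbD\sigma(\rho(t))(\nabla H(t))^2=\int_\bbD 2D(\rho(t))\nabla\rho(t)\cdot\nabla H(t)$. Taking $t=0$, where $\rho(0)=\gp$ and $H(0)=\frac{\delta\cS}{\delta\gp}(\gp)$ by \eqref{eq: cF et H appendix}, gives exactly \eqref{eq: full HJ}; and since $\cS(\gp)$ is by definition attained along such an optimal path for every admissible profile, the equation holds for all $\gp$.

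\emph{Main difficulty and an alternative.} Granting the identification \eqref{eq: optimal H} and the dynamic-programming decomposition of the cost, the proof reduces to the single integration by parts above; the genuinely delicate points are those shared by every MFT argument — existence and regularity of the optimizer, well-posedness of $\frac{\delta\cS}{\delta\gp}$, and differentiability of $t\mapsto\cS(\rho(t))$ — and are taken for granted at the level of this paper. A self-contained variant that does not presuppose \eqref{eq: optimal H} is to write the Bellman equation over an infinitesimal interval $[-\epsilon,0]$, namely $\cS(\gp)=\inf_{H}\big[\cS\big(\gp-\epsilon\,\partial_t\rho\big)+\frac{\epsilon}{2}\int_\bbD dx\,\sigma(\gp)(\nabla H)^2\big]+o(\epsilon)$ with $\partial_t\rho=\nabla\cdot\big(D(\gp)\nabla\gp-\sigma(\gp)\nabla H\big)$, expand to first order in $\epsilon$, integrate by parts (again using $H|_{\partial\bbD}=0$), and carry out the ensuing quadratic minimization over $H$: its minimizer is $H=\frac{\delta\cS}{\delta\gp}$, and substituting it back produces \eqref{eq: full HJ} while simultaneously re-deriving \eqref{eq: optimal H}.
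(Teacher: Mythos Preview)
Your proof is correct. The main line you take --- differentiating $t\mapsto \cS(\rho(t))$ along the optimal excitation path in two ways, once via the running cost \eqref{eq: cost variational principle} and once via the chain rule combined with \eqref{eq: optimal H}, then equating --- is a valid derivation, but it is not the route the paper follows. The paper instead runs the Bellman/dynamic-programming argument over an infinitesimal interval $[-\delta t,0]$: it writes $\cS(\gp)=\inf_j\big[\cS(\gp-\delta\gp)+\delta t\int\frac{(j+D(\gp)\nabla\gp)^2}{2\sigma(\gp)}\big]$ with $\delta\gp=-\delta t\,\nabla\cdot j$, expands to first order, integrates by parts, and then carries out the quadratic minimization over $j$; the minimizer is $j=-D(\gp)\nabla\gp+\sigma(\gp)\nabla\frac{\delta\cS}{\delta\gp}$ and substituting it back yields \eqref{eq: full HJ}. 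This is precisely the ``alternative'' you sketch at the end. The trade-off is clear: your main argument is a clean two-line computation once \eqref{eq: optimal H} is in hand, but it presupposes that relation; the paper's approach (your alternative) is slightly heavier but self-contained, since the infinitesimal optimization simultaneously identifies the optimal control as $\frac{\delta\cS}{\delta\gp}$ and produces the Hamilton--Jacobi equation in one stroke.
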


 To  establish  (\ref{eq: full HJ}), let $\rho(t,x)$ be  the  trajectory  which minimizes   (\ref{eq: quasipotential}), let $\gp-\delta \gp=\rho(-\delta t,x)$ where $\delta t$ and $\delta \gp$ are both small and let  $j(x)=q(0,x)$.
Then one has
$$\cS(\gp)  =  \inf_{\delta \gp} \left[ S(\gp-\delta \gp) +  \delta t \int_{\bbD} dx {(j + D(\gp) \nabla \gp)^2 \over 2 \sigma(\gp)} \right].
$$
 As  $\delta \gp= -  \delta t \,  \nabla \cdot j$
one has
$$0  =
 \delta t \inf_{\delta \gp}
 \int_{\bbD}  dx
 \left[ {\delta S \over \delta \gp} \nabla. j +
{(j + D(\gp) \nabla \gp)^2 \over 2 \sigma(\gp)} \right]
 =\delta t \inf_{\delta \gp}   \int_\bbD dx \left[ -j . \nabla {\delta S \over \delta \gp}  +    {(j + D(\gp) \nabla \gp)^2 \over 2 \sigma(\gp)} \right] ,
$$
where the last equality results from an integration by part and the fact  (see
(\ref{eq: cF et H appendix}))
that ${\delta S \over \delta \gp}=0$ on the boundary $\partial \bbD$. Then optimizing over $j$ (which is the same as optimizing over $\delta \gp$) leads to (\ref{eq: full HJ}).

\section{Check of Claim \ref{Claim: perturbative equations}}
\label{Appendix: Check of Claim}

We check now Claim \ref{Claim: perturbative equations}.
The gradient of the function $H$ defined in \eqref{eq: ansatz H d} is given by 
\begin{align}
\label{eq: perturbative H d gradient}
x \in \bbD, \qquad 
\nabla H (t,x) 
 =   \frac{2 D(\rho(t,x))}{\sigma(\rho(t,x))} \nabla \rho(t,x) 
+ \frac{2 \Jstat(x)}{\sigma(\bar \rho (x))}   
+ \nabla h(t,x),
\end{align}
where we used that the stationary current is $\Jstat(x) = -  D( \bar \rho(x)) \nabla \bar \rho(x)$.
Thus the evolution \eqref{eq: evolution equations d} of $\rho$ reads
\begin{align}
\partial_t \rho  & = - \nabla \cdot  ( D(\rho) \nabla  \rho) 
- 2  \nabla \cdot \left(  \frac{\sigma(\rho) }{\sigma(\bar \rho(x))}  \Jstat(x) \right)
- \nabla \cdot  ( \sigma(\rho) \nabla  h) \nonumber\\
& = - \nabla \cdot  ( D(\rho) \nabla  \rho) 
- 2   \Jstat(x)   \cdot \nabla  \left(  \frac{\sigma(\rho) }{\sigma(\bar \rho(x))} \right)
- \nabla \cdot  ( \sigma(\rho) \nabla  h),
\label{eq: rho evolution appendice}
\end{align}
as  the stationary current satisfies $\nabla \cdot \Jstat(x) =0$.
This shows the first equation in \eqref{eq: perturbative evolution equations 2}.

\medskip

We turn now to the derivation of the second equation in \eqref{eq: perturbative evolution equations 2} on the evolution of $h$.
Using \eqref{eq: rho evolution appendice}, the time derivative of $H$ is 
\begin{align}
\partial_t H & = \frac{2 D(\rho)}{\sigma(\rho)} \partial_t \rho + \partial_t h %\\ & 
= 
\frac{2 D(\rho)}{\sigma(\rho)} \left( - \nabla ( D(\rho) \nabla  \rho) 
- 2 \Jstat(x) \cdot \nabla \left(  \frac{\sigma(\rho) }{\sigma(\bar \rho(x))}   \right)
- \nabla ( \sigma(\rho) \nabla  h) \right) + \partial_t h   \nonumber  \\
%\end{align*}
%Expanding this, we get 
%\begin{align}
%\partial_t H & = - \frac{2 D(\rho)}{\sigma(\rho)}  \nabla ( D(\rho) \nabla  \rho)
%- 4  \frac{D(\rho)}{\sigma(\rho)} \Jstat(x) \cdot \nabla \left(  \frac{\sigma(\rho) }{\sigma(\bar \rho)}  \right)
%- \frac{2 D(\rho)}{\sigma(\rho)}   \sigma'(\rho) \nabla \rho \nabla  h 
% - 2 D(\rho) \Delta  h  + \partial_t h    \nonumber \\
 & = - \frac{2 D(\rho)}{\sigma(\rho)}  \nabla ( D(\rho) \nabla  \rho)
- 4 \Jstat(x) \frac{D(\rho)}{\sigma(\rho)}  \frac{\sigma'(\rho) }{\sigma(\bar \rho)} \nabla  \rho
+ 4 \Jstat(x) D(\rho)   \frac{\sigma'(\bar \rho) }{\sigma(\bar \rho)^2} \nabla  \bar \rho 
\nonumber\\
 & \qquad \qquad - \frac{2 D(\rho)}{\sigma(\rho)}   \sigma'(\rho) \nabla \rho \nabla  h 
 - 2 D(\rho) \Delta  h  + \partial_t h.
 \label{eq: derivee H}
\end{align}
As $H$ is given by \eqref{eq: perturbative H d gradient} and that $\nabla \cdot \Jstat(x) =0$, we get  after tedious computations
\begin{align*}
-  D(\rho) &  \Delta  H -  \frac{1}{2} \sigma'(\rho) ( \nabla  H)^2\\
%= -  D(\rho) \nabla \cdot \left(   \frac{2 D(\rho)}{\sigma(\rho)} \nabla \rho 
%+ \frac{2 \Jstat(x) }{\sigma(\bar \rho)}   
%+ \nabla h \right) \\
%& \qquad \qquad - \frac{1}{2} \sigma'(\rho) \left(   \frac{2 D(\rho)}{\sigma(\rho)} \nabla \rho + 
%\frac{2 \Jstat(x) }{\sigma(\bar \rho)}   
% + \nabla h \right)^2\\
%& = -  D(\rho)   \frac{\nabla  \big( 2 D(\rho)  \nabla \rho \big)}{\sigma(\rho)}
%+  \frac{  2 D(\rho)^2 }{\sigma(\rho)^2} \sigma'(\rho) (\nabla \rho )^2
%-  2 D(\rho)   \Jstat(x) \cdot \nabla   \frac{1}{\sigma(\bar \rho)}
% -  D(\rho) \Delta h  \\
%& \qquad \qquad 
%-  \sigma'(\rho)  \frac{2 D(\rho)^2}{\sigma(\rho)^2}  (\nabla \rho )^2
%-    \frac{2 D(\rho)\sigma'(\rho)}{\sigma(\rho)} \nabla \rho  \nabla h 
%- {\color{red} \frac{1}{2} \sigma'(\rho) \left(   \nabla h \right)^2}\\
%& \qquad \qquad 
%-  2 \frac{\sigma'(\rho)}{\sigma(\bar \rho)} \Jstat(x) \cdot \left(   \frac{2 D(\rho)}{\sigma(\rho)} \nabla \rho  + \nabla h \right)
% - 2    \frac{\sigma'(\rho)}{\sigma(\bar \rho)^2}  \big( \Jstat(x) \big)^2  \\
 & = -     \frac{2 D(\rho) }{\sigma(\rho)} \nabla \cdot  \big(  D(\rho)  \nabla \rho \big)
 -  D(\rho) \Delta h  -    \frac{2 D(\rho)\sigma'(\rho)}{\sigma(\rho)} \nabla \rho  \nabla h -   \frac{1}{2} \sigma'(\rho) \left( \nabla h \right)^2\\
 & 
 + 2  D(\rho) \frac{\sigma'(\bar \rho)}{\sigma(\bar \rho)^2}  \Jstat(x) \cdot \nabla  \bar \rho
-  4 \frac{\sigma'(\rho)}{\sigma(\bar \rho)}   \frac{ D(\rho)}{\sigma(\rho)} \Jstat(x)  \cdot \nabla \rho   
- 2   \frac{\sigma'(\rho)}{\sigma(\bar \rho)^2}   \big( \Jstat(x) \big)^2 
 -  2 \frac{\sigma'(\rho)}{\sigma(\bar \rho)}  \Jstat(x)  \cdot \nabla h .
\end{align*}
Combining this equation with \eqref{eq: derivee H},  we recover the second equation in \eqref{eq: perturbative evolution equations 2}
\begin{align*}
\partial_t h 
%& =    D(\rho) \Delta h - {\color{red} \frac{1}{2} \sigma'(\rho) \left(   \nabla h \right)^2}
%- 2   \frac{\sigma'(\rho)}{\sigma(\bar \rho)^2}   \big( \Jstat(x) \big)^2
% -  2 \frac{\sigma'(\rho)}{\sigma(\bar \rho)}  \Jstat(x)  \cdot \nabla h 
% - 2  D(\rho)   \frac{\sigma'(\bar \rho) }{\sigma(\bar \rho)^2} \Jstat(x)  \cdot  \nabla  \bar \rho \\
%& =    D(\rho) \Delta h - {\color{red} \frac{1}{2} \sigma'(\rho) \left(   \nabla h \right)^2}
%-  2 J\frac{\sigma'(\rho)}{\sigma(\bar \rho)}  \Jstat(x)  \cdot \nabla h 
%-   \frac{2 \Jstat(x)}{\sigma(\bar \rho)^2}   \cdot \left(\Jstat(x)    \sigma'(\rho)   
% + D(\rho)  \sigma'(\bar \rho)  \nabla  \bar \rho \right) \\
 & =    D(\rho) \Delta h 
 - \frac{1}{2} \sigma'(\rho) \left(   \nabla h \right)^2
- 2 \frac{\sigma'(\rho)}{\sigma(\bar \rho)}   \Jstat(x)  \cdot  \nabla h 
-   \frac{2 \big(\Jstat(x)  \big)^2  }{\sigma(\bar \rho)^2}    \left(  \sigma'(\rho)   
 -  \frac{D(\rho)}{ D(\bar \rho)}  \sigma'(\bar \rho)  \right) .
\end{align*}
This completes the Claim \ref{Claim: perturbative equations}.


\begin{thebibliography}{00}

\bibitem{Arita Krapivsky Mallick}
Arita C.,  Krapivsky, P. L.,   Mallick K., (2017).
Variational calculation of transport coefficients in diffusive lattice gases,
Phys. Rev. E 95, 032121 


\bibitem{BK}
Baek, Y.,   Kafri, Y. (2015). Singularities in large deviation functions. Journal of Statistical Mechanics: Theory and Experiment, 2015(8), P08026.





 \bibitem{BDGJL1} 
Bertini, L., De Sole, A., Gabrielli, D., Jona-Lasinio, G.,  Landim, C. (2001). Fluctuations in stationary nonequilibrium states of irreversible processes. Physical Review Letters, 87(4), 040601.

 \bibitem{BDGJL2} L.  Bertini, A.  De Sole, D.  Gabrielli, G.
Jona--Lasinio, C.  Landim, {\it Macroscopic fluctuation theory for stationary non equilibrium
states},   J. Stat, Phys. {\bf 107},  635-675  (2002)


 \bibitem{BDGJL1 maths} 
Bertini, L., De Sole, A., Gabrielli, D., Jona-Lasinio, G.,  Landim, C. (2001).
Large deviations for the boundary driven symmetric simple exclusion process
Math. Phys. Anal. Geom. 6, no. 3, 231--267.


\bibitem{BDGJL3}
Bertini, L., De Sole, A., Gabrielli, D., Jona-Lasinio, G.,   Landim, C. (2006). Large deviation approach to non equilibrium processes in stochastic lattice gases. Bulletin of the Brazilian Mathematical Society, 37, 611-643.


\bibitem{BDGJL4} 
L. Bertini, A. De Sole, D. Gabrielli, G. Jona Lasinio, C. Landim,
{\it Stochastic interacting particle systems out of equilibrium}, J. Stat. Mech. P07014. (2007) 


\bibitem{BDGJL5}
Bertini, L., De Sole, A., Gabrielli, D., Jona-Lasinio, G.,   Landim, C. (2010). Lagrangian phase transitions in nonequilibrium thermodynamic systems. Journal of Statistical Mechanics: Theory and Experiment, 2010(11), L11001.

\bibitem{BDGJL6}
Bertini, L., De Sole, A., Gabrielli, D., Jona-Lasinio, G.,   Landim, C. (2015). Macroscopic fluctuation theory. Reviews of Modern Physics, 87(2), 593-636.




\bibitem{BGL maths} 
Bertini, L.,   Gabrielli, D. ,    Landim, C. (2009).
Strong asymmetric limit of the quasi-potential of the boundary driven weakly asymmetric exclusion process, Comm. Math. Phys. 289, n 1, 311-334.



\bibitem{BGL}
Bertini, L. , Gabrielli, D. ,  Lebowitz J.L.  (2005).
Large deviations for a stochastic model of heat flow, Journal of statistical physics, {\bf 121} (5), 843-885.



\bibitem{BLM}
Bertini, L., Landim, C.,   Mourragui, M. (2009). Dynamical large deviations for the boundary driven weakly asymmetric exclusion process, Ann. Probab. {\bf 37}, N. 6. 



\bibitem{BE}
Blythe, R. A.,   Evans, M. R. (2007). Nonequilibrium steady states of matrix-product form: a solver's guide. Journal of Physics A: Mathematical and Theoretical, 40(46), R333.


\bibitem{BG}
Bodineau, T.,   Giacomin, G. (2004). From dynamic to static large deviations in boundary driven exclusion particle systems. Stochastic processes and their applications, 110(1), 67-81.

\bibitem{BEL}
Bouley, A., Erignoux, C.,   Landim, C. (2021). Steady state large deviations for one-dimensional, symmetric exclusion processes in weak contact with reservoirs. arXiv preprint arXiv:2107.06606.



\bibitem{BKP1}
Bunin, G., Kafri, Y.,   Podolsky, D. (2012). Non-differentiable large-deviation functionals in boundary-driven diffusive systems. Journal of Statistical Mechanics: Theory and Experiment, 2012(10), L10001.




\bibitem{BKP3}
Bunin, G., Kafri, Y.,   Podolsky, D. (2013). Cusp singularities in boundary-driven diffusive systems. Journal of Statistical Physics, 152, 112-135.


\bibitem{CGT}
Capanna, M., Gabrielli, D.,   Tsagkarogiannis, D. (2024). On a class of solvable stationary non equilibrium states for mass exchange models. Journal of Statistical Physics, 191(2), 25.

\bibitem{CGG}
Carinci, G., Giardin\`a, C., Giberti, C.,   Redig, F. (2013). Duality for stochastic models of transport. Journal of Statistical Physics, 152, 657-697.

\bibitem{CFGG}
Carinci, G., Franceschini, C., Gabrielli, D., Giardin\`a, C.,   Tsagkarogiannis, D. (2024). Solvable stationary non equilibrium states. Journal of Statistical Physics, 191(1), 10.


\bibitem{Dembo_Zeitouni}
Dembo, A., Zeitouni, O., (2010).
Large Deviations Techniques and Applications. Springer


\bibitem{De masi Ferrari}
De Masi, A., Ferrari, P. (1984).
A remark on the hydrodynamics of the zero-range processes. J Stat Phys 36, 81–87 

\bibitem{DmFG maths}
De Masi, A.,  Ferrari P., Gabrielli, D. (2023).
Hidden temperature in the KMP model
arXiv preprint arXiv:2310.01672


\bibitem{Derrida 2007} B. Derrida, {   Non-equilibrium steady states: fluctuations and large deviations of the density and of the current},  J. Stat. Mech. P07023 (2007)



\bibitem{DHS}
Derrida, B., Hirschberg, O.,   Sadhu, T. (2021). Large deviations in the symmetric simple exclusion process with slow boundaries. Journal of Statistical Physics, 182, 1-13.

\bibitem{DLS1}
Derrida, B., Lebowitz, J. L.,   Speer, E. R. (2001). Free energy functional for nonequilibrium systems: an exactly solvable case. Physical Review Letters, 87(15), 150601.

\bibitem{DLS2}
B. Derrida,  J. L. Lebowitz, E. Speer,  
{  Large deviation of the density profile in the steady state of the open symmetric simple exclusion process}. Journal of statistical physics, 107(3), 599-634 (2002).




\bibitem{DLS3}
Derrida, B., Lebowitz, J. L.,   Speer, E. R. (2002). Exact free energy functional for a driven diffusive open stationary nonequilibrium system. Physical review letters, 89(3), 030601.

\bibitem{DLS4}
Derrida, B., Lebowitz, J. L.,   Speer, E. (2003). Exact large deviation functional of a stationary open driven diffusive system: the asymmetric exclusion process. Journal of statistical physics, 110(3), 775-810.


\bibitem{DLS5}
Derrida, B., Lebowitz, J. L.,   Speer, E. (2007).
Entropy of open lattice systems. 
Journal of Statistical Physics 126 : 1083-1108.




\bibitem{DKS}
J.R. Dorfman, T.R. Kirkpatrick, J.V. Sengers,
{\it  Generic long-range correlations in molecular fluids},
 Annual Review of Physical Chemistry {\bf 45} 213-239 (1994)



\bibitem{ED}
Enaud, C.,   Derrida, B. (2004). Large deviation functional of the weakly asymmetric exclusion process. Journal of statistical physics, 114, 537-562.


\bibitem{Ellis}
Ellis, Richard,  Entropy, large deviations, and statistical mechanics. Springer, 2007.

\bibitem{Farfan}
Farfan, J.,
Static large deviations of boundary driven exclusion processes. 
Preprint, arXiv:0908.1798  (2009).


\bibitem{FLT}
Farfan, J., Landim, C.,   Tsunoda, K. (2019). Static large deviations for a reaction-diffusion model. Probability Theory and Related Fields, 174, 49-101.

\bibitem{Freidlin-Wentzell}
M. Freidlin, A. Wentzell,  
{\it Random perturbations of dynamical systems.} Springer-Verlag, (1998)





\bibitem{GKR}
Giardina, C., Kurchan, J.,   Redig, F. (2007). Duality and exact correlations for a model of heat conduction. Journal of mathematical physics, 48(3).


\bibitem{KL}
C. Kipnis,  C. Landim,  
{\it  Scaling limits of interacting particle systems}, {\bf  320} Springer (2013).

\bibitem{KOV} C. Kipnis, S. Olla, S. Varadhan,
{\it Hydrodynamics and large deviations for simple exclusion processes},
 Commun. Pure Appl. Math. {\bf 42}, 115-137 (1989)





\bibitem{LV}
Landim, C.,   Velasco, S. (2023). Dynamic and static large deviations of a one dimensional SSEP in weak contact with reservoirs. arXiv preprint arXiv:2308.10895.


\bibitem{Lanford}
O. Lanford, \textit{Entropy and equilibrium states in classical statistical mechanics},
Statistical Mechanics and Mathematical Problems", 1--113, Springer, 1973.


\bibitem{Onsager}
L. Onsager, S. Machlup, Fluctuations and irreversible processes. Physical Rev. 91 (1953),
1505-1512; Physical Rev. 91 (1953), 1512-1515.



\bibitem{SS}
Saha, S.,   Sadhu, T. (2024). Large deviations in the symmetric simple exclusion process with slow boundaries: A hydrodynamic perspective. SciPost Physics, 17(2), 033.

\bibitem{Spohn book} H. Spohn, {\it Large scale dynamics of interacting
particles}, Springer-Verlag, Berlin  (1991)





\bibitem{Spohn} H. Spohn, {\it Long range correlations for
    stochastic lattice gases in a non-equilibrium steady state},
J. Phys. A {\bf 16} 4275-4291 (1983)


\bibitem{Tailleur}
Tailleur, J., Kurchan, J.,  Lecomte, V. (2008). Mapping out-of-equilibrium into equilibrium in one-dimensional transport models. Journal of Physics A: Mathematical and Theoretical, 41(50), 505001.



\bibitem{VWijland Racz}
Van Wijland, F.,   R\'{a}cz, Z. (2005). Large deviations in weakly interacting boundary driven lattice gases. Journal of statistical physics, 118, 27-54.



\end{thebibliography}
\end{document}